\newcommand{\veps}{\varepsilon}
\newcommand{\norm}[1]{\left\lVert #1 \right\rVert}
\newcommand{\Tr}{\textnormal{Tr}}
\newcommand{\calH}{\mathcal{H}}
\renewcommand{\ket}[1]{\left| #1 \right\rangle}
\renewcommand{\bra}[1]{\left\langle #1 \right|}
\newcommand{\outerprod}[2]{| #1 \rangle \langle #2 |}
\DeclarePairedDelimiterX{\infdivx}[2]{(}{)}{%
  #1\;\delimsize\|\;#2%
}
\newcommand{\expect}[1]{\underset{#1}{\mathbb{E}}}
\DeclareMathOperator{\expct}{{\mathbb E}}
\DeclareMathOperator*{\argmin}{arg\,min}
\newtheorem{theorem}{Theorem}[section]
\newtheorem{lemma}[theorem]{Lemma}
\newtheorem{definition}[theorem]{Definition}
\begin{document}

\title{Fast quantum circuit cutting with randomized measurements}

\author{Angus Lowe}
\affiliation{Xanadu, Toronto, ON, M5G 2C8, Canada}
\affiliation{Center for Theoretical Physics, Massachusetts Institute of Technology, Cambridge, MA, 02139, USA}

\author{Matija Medvidović}
\affiliation{Xanadu, Toronto, ON, M5G 2C8, Canada}
\affiliation{Center for Computational Quantum Physics, Flatiron Institute, New York, NY, 10010, USA}
\affiliation{Department of Physics, Columbia University, New York, 10027, USA}

\author{Anthony Hayes}
\affiliation{Xanadu, Toronto, ON, M5G 2C8, Canada}

\author{Lee J. O'Riordan}
\affiliation{Xanadu, Toronto, ON, M5G 2C8, Canada}

\author{Thomas R. Bromley}
\affiliation{Xanadu, Toronto, ON, M5G 2C8, Canada}

\author{Juan Miguel Arrazola}
\affiliation{Xanadu, Toronto, ON, M5G 2C8, Canada}

\author{Nathan Killoran}
\affiliation{Xanadu, Toronto, ON, M5G 2C8, Canada}

\begin{abstract}
	We propose a method to extend the size of a quantum computation beyond the number of physical qubits available on a single device. This is accomplished by randomly inserting measure-and-prepare channels to express the output state of a large circuit as a separable state across distinct devices. Our method employs randomized measurements, resulting in a sample overhead that is $\widetilde{O}(4^k/\veps^2)$, where $\veps$ is the accuracy of the computation and $k$ the number of parallel wires that are ``cut" to obtain smaller sub-circuits. We also show an information-theoretic lower bound of $\Omega(2^k/\veps^2)$ for any comparable procedure. We use our techniques to show that circuits in the Quantum Approximate Optimization Algorithm (QAOA) with $p$ entangling layers can be simulated by circuits on a fraction of the original number of qubits with an overhead that is roughly $2^{O(p\kappa)}$, where $\kappa$ is the size of a known balanced vertex separator of the graph which encodes the optimization problem. We obtain numerical evidence of practical speedups using our method applied to the QAOA, compared to prior work. Finally, we investigate the practical feasibility of applying the circuit cutting procedure to large-scale QAOA problems on clustered graphs by using a $30$-qubit simulator to evaluate the variational energy of a $129$-qubit problem as well as carry out a $62$-qubit optimization.
\end{abstract}

\maketitle

\section{Introduction}
\label{sec:introduction}

In this work, we consider combining measurement outcomes from multiple quantum circuits to estimate the expectation value at the output of a larger quantum circuit. Computing such expectation values is a key step in many proposals for quantum algorithms such as the QAOA~\cite{farhi2014quantum, farhi2019quantum, hadfield2019quantum}, VQE~\cite{peruzzo2013variational}, or in the estimation of output probabilities, for example, for quantum classifiers~\cite{schuld2020circuit, lloyd2020quantum}.

Given a circuit comprising $m$ gates, the standard procedure to accomplish this task is to repeatedly run the desired circuit and measure in some basis, for a total quantum runtime on the order of $m/\veps^{2}$. Alternatively, one may perform a full classical simulation of the circuit, which is believed to require a runtime exponential in the number of qubits $n$ without strong assumptions about the structure of the circuit, for example, a bounded treewidth~\cite{Markov2008simulating}, low stabilizer-rank~\cite{bravyi2016improvedsimulation, Bravyi2019simulationofquantum, bravyi2016trading}, or low negativity~\cite{pashayan2015quasiprobabilityestimation}. Based on current hardware limitations both approaches become infeasible for moderately large $n$: the time cost of the classical simulation is prohibitive, while on the other hand current quantum devices are limited to small numbers of qubits. This motivates considering a hybrid approach, where results obtained on smaller quantum devices are used to solve the estimation task.

For instance, Ref.~\cite{bravyi2016trading} suggests constructing ``virtual qubits" which enable the simulation of an otherwise intractable computation. Similarly, Ref.~\cite{peng2020simulating} presented a framework for simulating clustered quantum circuits. A third framework, presented in Refs.~\cite{Mitarai_2021overhead, Mitarai2021constructing}, focuses on removing entangling gates between sub-circuits. For these methods, as well as in follow-up work~\cite{piveteau2022circuitknitting, perlin_mle_circuit_cutting, wiersema2022circuitconnectivity, avron2021distributed, ayral2020divideandcompute, marshall2022qmlcircuitcutting}, one takes advantage of large, disconnected components of the quantum circuit which may be obtained by removing a subset of wires or gates. We refer to such methods collectively as \textit{quantum circuit cutting}. Related work analyzes hybrid approaches to solving 3SAT~\cite{dunjko2018smalldevices} and quantum simulation by combining tensor network methods with small quantum devices~\cite{Barratt2021parallel, yuan2021hybridtn}.

Our main result is a fast quantum circuit cutting method based on separating weakly entangled circuits through randomized measurements. Compared to the approach in~\cite{peng2020simulating} in which single-qubit Pauli measurements are performed, our method offers a quadratic improvement in the overhead in certain cases of interest. As a consequence, the number of wires that can be cut under a fixed budget on the runtime is approximately doubled using this approach. Furthermore, our method likely outperforms all other proposed circuit cutting methods for simulating quantum computation using circuits with appropriate structure (see Table~\ref{tab:circuit_cutting_comparison}).

Our approach combines intuition which stems from quantum tomography as well as the notion of a matrix-product state (MPS). Specifically, the algorithm can be viewed as an attempt to classically simulate the entanglement between qudits in an MPS with low bond dimension using repeated mid-circuit measurements. This measurement procedure is described by \textit{measure-and-prepare} channels (equivalent to entanglement-breaking channels~\cite{horodecki2003entanglementbreaking}), enabling the expression of the post-measurement state as a separable state across distinct devices. The measurements we choose for this task are based on unitary 2-designs, which are related to the sample-optimal measurements in various quantum learning settings~\cite{Guta2020faststatetomography, Huang2020predicting, chen2021robustshadow, elben2022randomizedtoolbox}.

A point of departure from previous work is that our procedure uses synchronized measurements and preparations. For circuits with an MPS-like structure --- as considered in~\cite{Huggins_2019} for example --- this requirement may be satisfied by repeatedly executing circuits on a single device. However, if the so-called communication graph~\cite{peng2020simulating} of the sub-circuits contains a cycle, then our method genuinely requires multiple circuits, separated in space rather than time.

To understand the source of our speedup, we turn to the form of the equation we employ in our proposal. In any dimension $d>0$, one may consider a collection of linear maps  $\Phi_i:\mathsf{L}(\mathbb{C}^d)\to\mathsf{L}(\mathbb{C}^d)$ such that
\begin{equation}
	\label{eq:high_level_id}
	\mathrm{id} = \sum_{i}a_i \Phi_i \; ,
\end{equation}

\noindent where $\mathrm{id}:\mathsf{L}(\mathbb{C}^d)\to\mathsf{L}(\mathbb{C}^d)$ is the identity map. Throughout, we let $\mathsf{L}(\mathbb{C}^d)$ denote the set of linear operators acting on $\mathbb{C}^d$. See Appendix~\ref{sec:preliminaries} for a comprehensive set of definitions. As suggested in Ref.~\cite{peng2020simulating}, we choose an identity of this form such that the substitution of any individual map $\Phi_i$ in a suitable location within the circuit enables the execution of a quantum computation which would otherwise be intractable. We refer to this as \textit{wire cutting}. A similar identity is employed to decompose 2-qubit operations into product operators in Refs.~\cite{bravyi2016trading, Mitarai2021constructing, Mitarai_2021overhead, piveteau2022circuitknitting}, and we refer to this approach as \textit{gate cutting}. In either case, each application of the identity results in a multiplicative overhead in the runtime scaling with the 1-norm of the coefficients, e.g., $\sum_i |a_i|$ in the case where the identity takes the form in Eq.~\eqref{eq:high_level_id}. This is similar to the effect of the negativity quantity associated with fully classical simulation method of~\cite{pashayan2015quasiprobabilityestimation} based on quasiprobability sampling.

As previously observed in~\cite{Mitarai_2021overhead, wiersema2022circuitconnectivity, piveteau2022circuitknitting}, the overhead in circuit cutting depends on the choice of linear maps used in the decomposition. In particular, Ref.~\cite{piveteau2022circuitknitting} shows a lower bound on the overhead for gate cutting in several cases, while Ref.~\cite{wiersema2022circuitconnectivity} proposed a simulated-annealing-based search for solutions to Eq.~\eqref{eq:high_level_id} to attempt to minimize the above 1-norm quantity. Inspired by recent work demonstrating the utility of randomized measurements~\cite{elben2022randomizedtoolbox}, we give an explicit decomposition of the form in Eq.~\eqref{eq:high_level_id} whose 1-norm scales with the dimension of the subspace upon which the channel acts. We argue that this bound results in our method outperforming the state-of-the-art for a natural problem (see Table~\ref{tab:circuit_cutting_comparison}), and are able to prove that it is at most (roughly) quadratically worse than the best possible wire cutting method by deriving an information-theoretic lower bound.

\begin{figure*}[!htbp]
	\centering
	\includegraphics[width=\linewidth]{./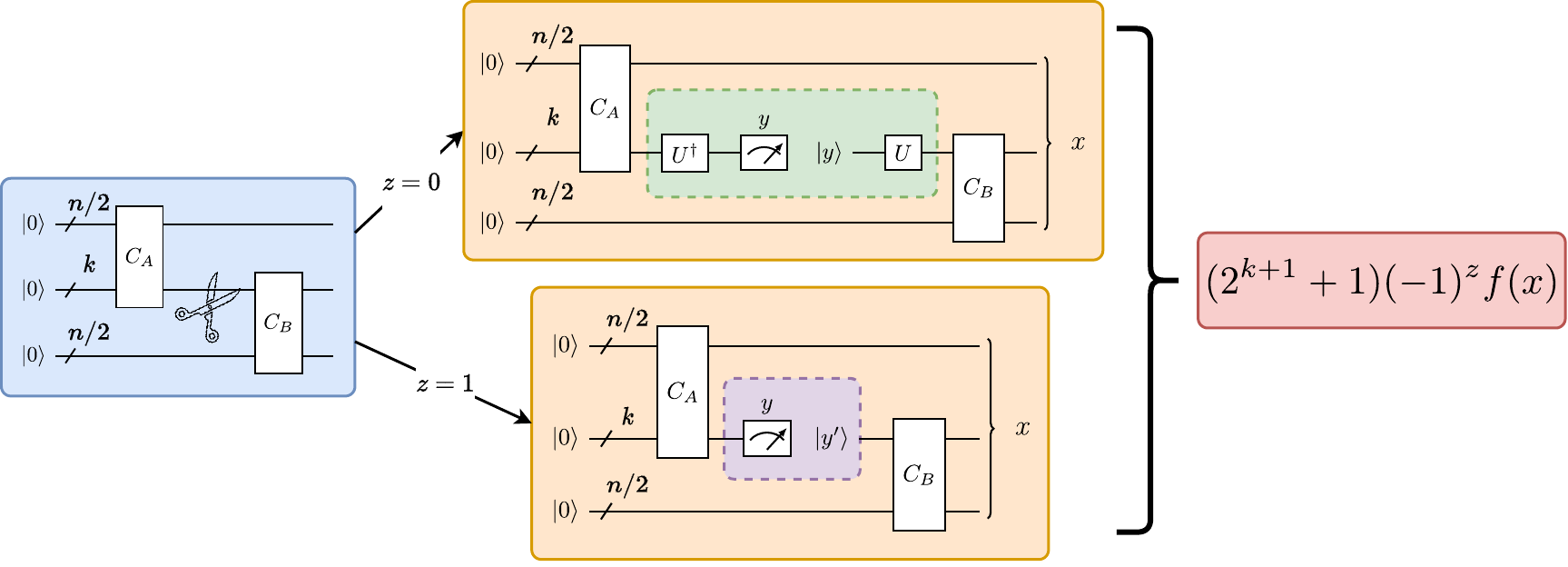}
	\caption{
		Example of our quantum circuit cutting scheme. One of two measure-and-prepare channels ($z=0,1$) is applied at random. If $z=0$, a randomized measurement based on a unitary 2-design (e.g., a random Clifford) $U$ is performed. If $z=1$, the qubits are traced out, and a random basis state $\ket{y^\prime}$ is prepared. The output is an unbiased estimator of the expectation value at the output of the original circuit.
	}
	\label{fig:flowchart}
\end{figure*}

We then apply our observations to derive sufficient conditions for the efficient hybrid simulation of circuits which arise in the Quantum Approximate Optimization Algorithm (QAOA)~\cite{farhi2014quantum, Farhi2014, farhi2019quantum}. This continues a line of research where the goal is to identify restricted families of circuits which can be simulated using smaller devices than might naively be expected. For instance, Ref.~\cite{bravyi2016trading} proposed a notion of sparse circuits for which a small number of qubits could be efficiently traded for classical computation, while Refs.~\cite{peng2020simulating,childs2021theoryoftrotter} show that weakly interacting, clustered Hamiltonians can be simulated efficiently for short times through wire cutting. We show that $p$-layer QAOA circuits can be efficiently simulated in a hybrid manner up to depth $p = O(\log n)$, provided the underlying graph encoding the optimization problem has a small balanced vertex separator. This addresses an open problem from Ref.~\cite{peng2020simulating} which asked for good ways to partition circuits based on the structure of the problem at hand. Our observations should also carry over to Trotter-Suzuki-based circuits for the Hamiltonian simulation of 2-local Hamiltonians whose interaction graph has small vertex separators, as opposed to the edge separators considered in~\cite{peng2020simulating}. We then show how to sample from the output distribution of a QAOA circuit to recover bitstrings encoding optimal solutions by combining measurement outcomes obtained from the smaller circuits.

We remark that prior work~\cite{li2021largescaleqaoa} proposed utilizing balanced vertex separators to split instances of the Max-Cut problems into smaller sub-problems for the QAOA, without considering circuit cutting methods. In addition, Refs.~\cite{saleem2021quantum, tang2022scaleqc} consider applying the circuit cutting method proposed in~\cite{peng2020simulating} to solve a modified version of the QAOA. However, these algorithms require estimating the measurement distribution of the sub-circuits as a subroutine, which limits the practicality of such proposals.

We demonstrate practical speedups by classically simulating our quantum circuit cutting method applied to small instances of the QAOA and comparing to the procedure in~\cite{peng2020simulating}. These numerical experiments are enabled by leveraging open-source circuit cutting functionality available in PennyLane~\cite{bergholm2018pennylane}. We then show that a circuit cutting procedure can be performed on large-scale QAOA problems. Using the tensor-contraction based method introduced in~\cite{peng2020simulating} and available in PennyLane, we break the full QAOA circuit into multiple circuit fragments of at most $30$ qubits and execute them on a cluster of NVIDIA A100 40GB GPUs, acting as simulator substitutes for practical quantum hardware devices. As a result, the variational energy of a 2-layer QAOA circuit of 129 qubits is evaluated, as well as a full 1- and 2-layer QAOA optimization of a 62-qubit circuit.

\section{Fast circuit cutting with randomized measurements}
\label{sec:fast_circuit_cutting}

We first provide details for the computational model considered in this work. A quantum circuit is a directed acyclic network of gates (vertices) connected by wires (edges) which represent the qubits upon which the gates act. In general, gates represent quantum channels, i.e., completely positive and trace-preserving linear maps. The overall action of the circuit is some quantum channel acting on the input qubits, which throughout this paper we take to be $\rho_0^{\otimes n} := (\outerprod{0}{0})^{\otimes n}$ for a circuit with $n$ input qubits. The \textit{size} of the circuit is the total number of gates. As in Peng et al.~\cite{peng2020simulating}, we adopt a model of quantum computation in which the goal is to estimate the expectation value of some diagonal observable $O_f = \sum_{x\in\{0,1\}^n}f(x)\outerprod{x}{x}$, where $f:\{0,1\}^n\to [-1,1]$ is a classically efficiently computable function.

More specifically, given an input circuit on $n$ qubits representing the action of some channel $\mathcal{N}$, the task is to estimate the value $\Tr(O_f\mathcal{N}(\rho_0^{\otimes n}))$ to within additive error $\veps$. This is a more general version of the computational model adopted in~\cite{bravyi2016trading}, and --- up to single-qubit rotations prior to measurement --- encompasses the ``quantum mean-value problem" defined in~\cite{Bravyi2021qmv}. The model is motivated by the fact that measurement in an alternative, efficiently implementable basis, can be absorbed into the unitary operation of the circuit, and an observable whose spectral norm is larger by a polynomial factor only incurs a polynomial overhead in the runtime for the estimation task. In the remainder of this paper we further assume that $f$ can be evaluated in constant time, for succinctness in the statement of our results.

We now define the two different measure-and-prepare channels which are used in our scheme. Let $\{\ket{j}\}_{j=1}^d$ denote the standard basis for a $d$-dimensional complex Euclidean space. In any dimension $d>0$ we may consider a \textit{random} POVM $\{U \outerprod{j}{j}U^\dag\}_{j=1}^d$ comprising rank-1 measurement operators where $U\in\mathsf{U}(\mathbb{C}^d)$ is a random unitary operator (matrix-valued random variable) which forms a unitary 2-design (see Appendix~\ref{sec:preliminaries}). For a fixed choice of such a random POVM, the first channel we consider $\Psi_0: \mathsf{L}(\mathbb{C}^d)\to\mathsf{L}(\mathbb{C}^d)$ is then defined as
\begin{align}\label{eq:psi_0_defn}
	\Psi_0(X) = \expct_U \left[\sum_{j=1}^d \bra{j}U^\dag X U\ket{j}U\outerprod{j}{j}U^\dag\right],
\end{align}

\noindent for every $X\in \mathsf{L}(\mathbb{C}^d)$. This mapping is a measure-and-prepare channel, i.e., it describes a measurement performed on the $d$-dimensional register (using the random POVM described above) followed by preparing a new register in the state corresponding to the measurement outcome that was observed. It would also suffice for our purposes to consider a fixed rank-1 POVM whose measurement operators are based on \textit{state} 2-designs, since the resulting measure-and-prepare channel would be identical. However, we focus on randomized basis measurements to make explicit how such a channel could be implemented in practice.

\renewcommand*{\arraystretch}{1.2}
\begin{centering}
\begin{table*}[t]
\centering
	\begin{tabular}{c|c c c}
		                                                & Type         & Overhead              & Communication              \\
		\hline
		Thm.~1 in~\cite{bravyi2016trading}              & Gate cutting & $2^{O(nD)}$           & None                       \\
		Thm.~1 in~\cite{peng2020simulating}             & Wire cutting & $16^k$                & None                       \\
		Thm.~5 in~\cite{Mitarai2021constructing}        & Gate cutting & $\min\{9^{kD},16^k\}$ & None                       \\
		Eq.~(5.1) in~\cite{piveteau2022circuitknitting} & Gate cutting & $4^{kD}$              & LOCC \\
		Thm.~\ref{thm:main_theorem_bipartition}         & Wire cutting & $4^k$                 & LOCC    \\ [1ex]
	\end{tabular}
	\caption{Upper bounds on the runtime of circuit cutting methods applied to circuits of the form given in Fig.~\ref{fig:flowchart}, omitting polynomial factors in $n$, $k$, $D$, and $\veps$. It is assumed that the task is to simulate a quantum computation with this circuit using a device comprising at most $n/2 + k$ qubits (``doubling" the number of qubits). The ``Type" column identifies whether the procedure exploits decompositions of gates or of the identity operation applied to wires. Here $D$ is a sparseness parameter, denoting the assumption that the qubits being cut participate in at most $D$ multi-qubit gates. Note that in this case, classical communication may be achieved by recycling qubits on a single quantum device.\label{tab:circuit_cutting_comparison}}
\end{table*}
\end{centering}

The second measure-and-prepare channel we consider is the completely depolarizing channel $\Psi_1:\mathsf{L}(\mathbb{C}^d)\to\mathsf{L}(\mathbb{C}^d)$ defined as
\begin{align}\label{eq:psi_1_defn}
	\Psi_1(X) = \Tr(X)\frac{\mathds{1}}{d},
\end{align}

\noindent for every $X\in \mathsf{L}(\mathbb{C}^d)$. The following lemma implies that measure-and-prepare channels of the above form can be used to develop an improved approach to circuit cutting. The claim follows straightforwardly by evaluating the action of the channel $\Psi_0$, as done previously in~\cite{Guta2020faststatetomography, Huang2020predicting, elben2022randomizedtoolbox}, for example. A proof is provided in Appendix~\ref{sec:proof_of_main_result} for completeness.
\begin{lemma}
    \label{lem:main_channel_lem}
	Let $d$ be a positive integer and $\Psi_0$, $\Psi_1$ be the channels defined in Eqs.~\eqref{eq:psi_0_defn} and~\eqref{eq:psi_1_defn}, respectively, acting on $d$-dimensional states. Define the Bernoulli random variable $z\in \{0,1\}$ to be equal to $1$ with probability $d/(2d+1)$. It holds that
	\begin{align}\label{eq:channel_expec_expression}
		\mathrm{id} = (2d+1)\expct_{z}\left[\ (-1)^z\  \Psi_z\right].
	\end{align}
\end{lemma}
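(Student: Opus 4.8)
The plan is to first carry out the expectation over the Bernoulli variable $z$ explicitly, reducing the claim to a deterministic identity between linear maps, and then to evaluate the randomized-measurement channel $\Psi_0$ in closed form using the defining property of a unitary $2$-design. Since $z = 1$ with probability $d/(2d+1)$ and hence $z=0$ with probability $(d+1)/(2d+1)$, the prefactor cancels the denominators and the right-hand side of Eq.~\eqref{eq:channel_expec_expression} collapses to
\begin{equation}
  (2d+1)\,\expct_z\!\left[(-1)^z\Psi_z\right] = (d+1)\,\Psi_0 - d\,\Psi_1 .
\end{equation}
It therefore suffices to establish that $\mathrm{id} = (d+1)\Psi_0 - d\,\Psi_1$ as maps on $\mathsf{L}(\mathbb{C}^d)$.

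The crux of the argument is the closed-form evaluation of $\Psi_0$. Fixing the summation index $j$ and writing $\ket{\psi} = U\ket{j}$, the marginal distribution of $\ket{\psi}$ induced by the $2$-design is the same for every $j$ and agrees with the Haar measure on second moments. I would therefore invoke the twirl identity $\expct_\psi[\outerprod{\psi}{\psi}^{\otimes 2}] = \frac{1}{d(d+1)}(\mathds{1} + \mathrm{SWAP})$ on $\mathbb{C}^d \otimes \mathbb{C}^d$. Since $\bra{\psi}X\ket{\psi}\,\outerprod{\psi}{\psi} = \Tr_2\!\big[(\mathds{1}\otimes X)\,\outerprod{\psi}{\psi}^{\otimes 2}\big]$, taking the expectation and using $\Tr_2[(\mathds{1}\otimes X)\mathds{1}] = \Tr(X)\mathds{1}$ together with the swap trick $\Tr_2[(\mathds{1}\otimes X)\,\mathrm{SWAP}] = X$ yields
\begin{equation}
  \expct_\psi\!\left[\bra{\psi}X\ket{\psi}\,\outerprod{\psi}{\psi}\right] = \frac{1}{d(d+1)}\big(\Tr(X)\,\mathds{1} + X\big) .
\end{equation}
(Equivalently, one can verify this entrywise from $\expct_\psi[\psi_a\bar\psi_b\psi_d\bar\psi_c] = \frac{1}{d(d+1)}(\delta_{ab}\delta_{cd} + \delta_{ac}\delta_{bd})$.) As every one of the $d$ terms in Eq.~\eqref{eq:psi_0_defn} contributes identically, summing over $j$ gives
\begin{equation}\label{eq:psi0_closed}
  \Psi_0(X) = \frac{1}{d+1}\big(\Tr(X)\,\mathds{1} + X\big) .
\end{equation}

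Finally, I would substitute Eq.~\eqref{eq:psi0_closed} and $\Psi_1(X) = \Tr(X)\mathds{1}/d$ into the reduced identity, whereupon
\begin{equation}
  (d+1)\,\Psi_0(X) - d\,\Psi_1(X) = \big(\Tr(X)\,\mathds{1} + X\big) - \Tr(X)\,\mathds{1} = X ,
\end{equation}
so the two trace terms cancel and we recover $\mathrm{id}(X) = X$, completing the proof. The only genuinely nontrivial step is the closed form~\eqref{eq:psi0_closed}; everything else is bookkeeping. I expect the main subtlety to lie in justifying that the $2$-design hypothesis (as opposed to full Haar randomness) is enough: this holds precisely because $\Psi_0$ depends on $U$ only through a degree-$(2,2)$ polynomial in its entries and their conjugates, which a unitary $2$-design reproduces exactly, and because the relevant single-copy marginal does not depend on $j$.
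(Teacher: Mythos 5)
Your proof is correct and takes essentially the same route as the paper's: both evaluate $\Psi_0$ in closed form by invoking the $2$-design reduction to the Haar second-moment identity $\expct\left[\outerprod{\psi}{\psi}^{\otimes 2}\right] = \frac{1}{d(d+1)}\left(\mathds{1}\otimes\mathds{1} + W\right)$ together with the partial-trace swap trick, arriving at $\Psi_0(X) = \frac{1}{d+1}\left(\Tr(X)\mathds{1} + X\right)$, and then combine this linearly with $\Psi_1$ to recover the identity map. The only cosmetic differences are that you trace over the second tensor factor where the paper traces over the first, and you expand the Bernoulli expectation at the outset rather than at the end.
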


Eq.~\eqref{eq:channel_expec_expression} suggests a sampling-based approach to wire cutting by randomly inserting channels on the wires to be ``cut". This idea is depicted in Fig.~\ref{fig:flowchart}, and is realized by the procedure described in Algorithm~\ref{alg:alg_1} in Appendix~\ref{sec:pseudocode}. The following theorem then bounds the overhead incurred from using this method to perform wire-cutting on a natural example of a clustered circuit. Once again, we ignore the classical time to compute the post-processing function $f$ in the statement of this theorem.
\begin{theorem}[Bipartitioning circuits]
	\label{thm:main_theorem_bipartition}
	Let $C$ be a size-$m$ quantum circuit acting on $n$ qubits which is a composition of circuits $C_A,C_B$ acting non-trivially on sets of qubits $A,B\subseteq [n]$, respectively. If $|A\cap B|\leq k$, then quantum computation using $C$ can be simulated to within accuracy $\veps$ in time $O(4^k (m+k^2)/\veps^2)$ by a quantum circuit acting on at most $\max\{|A|, |B|\}$ qubits using the procedure described in Algorithm~\ref{alg:alg_1}.
\end{theorem}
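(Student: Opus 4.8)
The plan is to turn Lemma~\ref{lem:main_channel_lem} into an unbiased estimator of $\Tr(O_f\mathcal{N}(\rho_0^{\otimes n}))$ by cutting all shared wires \emph{at once}, and then to control its variance and per-sample cost. Let $S := A\cap B$ be the register of shared qubits, of dimension $d = 2^{|A\cap B|}\le 2^k$, and write the overall channel as $\mathcal{N}=C_B\circ C_A$, where the output of $C_A$ on $S$ is the input of $C_B$ there. Inserting the identity on $S$ and substituting $\mathrm{id}_S=(2d+1)\,\expct_z[(-1)^z\Psi_z]$ from the lemma, linearity of the trace yields
\begin{equation}
	\Tr\!\big(O_f\mathcal{N}(\rho_0^{\otimes n})\big)=(2d+1)\,\expct_z\!\Big[(-1)^z\,\Tr\!\big(O_f\, C_B\circ(\Psi_z\otimes\mathrm{id})\circ C_A(\rho_0^{\otimes n})\big)\Big].
\end{equation}
The first step is to show that each term on the right is produced operationally by Algorithm~\ref{alg:alg_1} acting on the two registers separately: for $z=0$, $\Psi_0$ is realized by applying a random Clifford $U$ on $S$ at the output of $C_A$, measuring to obtain an outcome $j$, and preparing the synchronized post-measurement state $U\ket{j}$ as the $S$-input of $C_B$; for $z=1$, the depolarizing channel $\Psi_1$ is realized by discarding $S$ and preparing a uniformly random basis state. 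Concatenating the computational-basis outcomes of the two sub-circuits into a full bitstring $x$ and setting $\hat{O}=(2d+1)(-1)^z f(x)$ gives, by the display above together with $P(z{=}1)=d/(2d+1)$, an unbiased estimator $\expct[\hat{O}]=\Tr(O_f\mathcal{N}(\rho_0^{\otimes n}))$.

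The second step is the variance bound, which is where the advantage of randomized measurements enters. Because $f$ takes values in $[-1,1]$ and the prefactor is $|(-1)^z(2d+1)|=2d+1$, the estimator is bounded, $|\hat{O}|\le 2d+1\le 2^{k+1}+1$, so its second moment --- and hence its variance --- is at most $(2d+1)^2=O(4^k)$. A standard concentration argument (Chebyshev, or median-of-means for a high-probability guarantee) then shows that $O(4^k/\veps^2)$ independent samples suffice to estimate the target to additive error $\veps$. The crucial quantitative input here is that the $1$-norm of the decomposition in Lemma~\ref{lem:main_channel_lem} is $2d+1=O(2^k)$, i.e.\ \emph{linear} in the joint dimension $d=2^k$; cutting the $k$ wires one at a time with single-qubit designs would instead multiply $1$-norms and give a prefactor $5^k$ and variance $\sim 25^k$, so performing a single joint randomized measurement on all of $S$ is essential to obtaining base $4$.

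Finally, I would account for the per-sample resources and combine. Each sub-circuit runs on at most $\max\{|A|,|B|\}$ qubits, since the random Clifford measurement and the basis-state preparation act in place on $S$ and add no ancillas; implementing a random $k$-qubit Clifford and the associated preparation contributes $O(k^2)$ gates, so a single run has size $O(m+k^2)$. Multiplying the $O(4^k/\veps^2)$ sample count by the $O(m+k^2)$ cost per run gives the claimed runtime $O(4^k(m+k^2)/\veps^2)$. The main obstacle is not any single estimate but the careful formalization of the cut-and-recombine step: one must verify that splitting the synchronized measure-and-prepare channel $\Psi_0$ across two physically distinct registers (with classical communication of the outcome $j$) reproduces exactly the term in the display, and that $f$ can be evaluated on the concatenated outcome. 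Once this bookkeeping is in place, the variance and runtime bounds follow directly from the boundedness of $\hat{O}$ and the $O(d)$ overhead guaranteed by the lemma.
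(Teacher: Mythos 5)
Your proposal is correct and follows essentially the same route as the paper: apply Lemma~\ref{lem:main_channel_lem} jointly on the shared register of dimension $d=2^k$, realize $\Psi_0$ via a synchronized random-Clifford measure-and-prepare and $\Psi_1$ via trace-out plus uniform basis-state preparation, observe that the resulting estimator $(2d+1)(-1)^z f(x)$ is unbiased and bounded by $2d+1$, and combine an $O(4^k/\veps^2)$ concentration bound (the paper uses Hoeffding where you use Chebyshev/median-of-means) with the $O(m+k^2)$ per-sample cost of sampling and implementing the Clifford. The only detail worth making explicit, as the paper does, is that uniformly random Cliffords form a unitary $2$-design (indeed a $3$-design), which is what licenses the claim that the Clifford measurement implements the channel $\Psi_0$ defined in Eq.~\eqref{eq:psi_0_defn}.
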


\noindent The proof of the theorem is given in Appendix~\ref{sec:proof_of_bipartition_theorem}. The theorem follows from Lemma~\ref{lem:main_channel_lem} by using the fact that random $k$-qubit Clifford operators: (i) form a unitary $3$-design~\cite{kueng2015stabilizer3design, webb2017clifford, zhu2017clifford}, (ii) can be sampled efficiently~\cite{vandenberg2021simple}, and (iii) can be implemented using poly-depth circuits~\cite{aaronson2004improvedsimulation}. We remark that while random Clifford operations on many qubits can be challenging to implement in practice, we expect they may be feasible for small values of $k$, which is the regime suitable for efficient circuit cutting. Additionally, though we appeal to Hoeffding's inequality to bound the sample complexity in our proof, it may be possible to achieve a better scaling in terms of $k$ for certain applications where the variance of the observables is non-trivially bounded, which we leave as a possible direction for future work. 

When $|A|\approx |B| \approx n/2$ and $k$ is a small constant, this technique effectively doubles the number of qubits which can be simulated given a quantum device. In Table~\ref{tab:circuit_cutting_comparison} we compare the overheads of several circuit cutting methods assuming tightness of the upper bounds shown in each respective work (see Sec.~\ref{sec:numerical_comparison} for numerical evidence that this is approximately the case for the wire cutting method in Ref.~\cite{peng2020simulating}). In the table, the setting being considered is a special case of that in Theorem~\ref{thm:main_theorem_bipartition}: we have a composition of two circuits acting on wires belonging to overlapping sets $A$ and $B$ with $|A|=|B|=n/2+k$, and the goal is to run the quantum computation using a device limited to $n/2+k$ wires. The scenario is depicted in the left-hand side of Fig.~\ref{fig:flowchart}. To enable the comparison with gate-cutting methods, it is necessary that we assume each circuit satisfies a certain sparseness property, introduced in Ref.~\cite{bravyi2016improvedsimulation}. Namely, we assume that the qubits in $A\cap B$ participate in at most $D$ multi-qubit gates. Otherwise, gate-cutting methods may incur an unbounded overhead to accomplish the simulation task.

A natural question is whether the method described in Algorithm~\ref{alg:alg_1} saturates the best possible overhead for wire cutting. For example, one might wonder whether a subexponential runtime, or runtime on the order of $2^{ck}$ for some $0 < c < 1$ can be achieved. We rule out these possibilities using an information-theoretic lower bound on the sample complexity of a task that reduces to wire-cutting, which we describe in more detail in Appendix~\ref{sec:lower_bounds}. Roughly speaking, a procedure for wire cutting enables one to succeed at a difficult quantum state discrimination task by cutting $n$ wires, similar to the lower bounds on classical shadows which appear in~\cite{Huang2020predicting}.
\begin{theorem}[Informal]\label{thm:lower_bound}
	Any procedure for bipartitioning the circuit $C$ through wire cutting necessarily incurs a worst-case overhead of $\Omega(2^k/\veps^2)$.
\end{theorem}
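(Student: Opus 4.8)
The plan is to reduce wire cutting to a quantum state discrimination (or shadow-tomography-style) task whose sample complexity is provably large, following the template of the classical shadows lower bound in~\cite{Huang2020predicting}. The key observation is that any procedure capable of bipartitioning an arbitrary circuit $C$ by cutting $k$ wires must, as a special case, handle the situation where the $k$ cut wires carry an unknown state that we wish to learn something about. If we can engineer a family of circuits indexed by some hidden parameter $\theta$ such that (a) distinguishing the values of $\theta$ requires estimating the expectation value $\Tr(O_f \mathcal{N}(\rho_0^{\otimes n}))$ to accuracy $\veps$, and (b) information-theoretically the hidden state on the $k$ cut qubits reveals $\theta$ only through measurements that yield little information per sample, then the number of samples --- and hence the overhead --- must be at least $\Omega(2^k/\veps^2)$.

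\textbf{First I would} set up the reduction precisely. Concretely, I would consider a circuit in which the two halves $C_A, C_B$ communicate only through the $k$ cut wires, and I would place on those wires an ensemble of states drawn so that the induced task on the cutting procedure is equivalent to a mean estimation or discrimination problem over the $2^k$-dimensional Hilbert space of the cut register. A natural candidate ensemble is one built from the maximally mixed state perturbed in a random direction, e.g.\ states of the form $\frac{\mathds{1}}{2^k} + \veps\, O$ for observables $O$ drawn from a suitable packing, so that distinguishing them requires resolving an $\veps$-sized signal. Because the cutting procedure only has access to classical communication across the cut (as emphasized in the comparison of Table~\ref{tab:circuit_cutting_comparison}), any single-copy measurement on the $k$-qubit register extracts at most $O(k)$ bits, and more importantly the statistical distinguishability under an unentangled/local-measurement strategy degrades by a factor scaling with the dimension $2^k$.

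\textbf{The core quantitative step} is a lower bound on the number of copies needed, which I would obtain via an information-theoretic argument: either a Holevo-type bound, a direct analysis using the quantum Chernoff/Bretagnolle--Huber inequality, or Le Cam's two-point / Fano's method applied to the induced classical channel from state to measurement outcome. The factor $2^k$ should emerge from the fact that a single-copy (product) measurement on a maximally mixed $2^k$-dimensional state carries vanishing information about a rank-one perturbation of magnitude $\veps$, precisely as in the shadow tomography lower bounds; the $1/\veps^2$ is the standard sensitivity scaling for estimating an $\veps$-sized mean. The reduction must be arranged so that each ``run'' of the circuit cutting algorithm corresponds to at most one sample (one measurement of the cut register), so that a total overhead $N$ translates directly into $N$ samples for the discrimination task, and the discrimination lower bound $N = \Omega(2^k/\veps^2)$ then lower-bounds the overhead.

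\textbf{The main obstacle} I anticipate is ensuring the reduction is faithful in the restricted circuit-cutting model: I must verify that any valid wire-cutting procedure really does reduce to a local/unentangled measurement strategy on the cut register (so that the single-copy, rather than entangled-copy, lower bound applies), and that embedding the hard discrimination instance into a genuine circuit $C$ with a computable post-processing function $f:\{0,1\}^n\to[-1,1]$ does not secretly give the algorithm extra power. In particular, I would need to argue carefully that the classical-communication-only structure across the cut forbids the algorithm from coherently combining copies in a way that would beat the $2^k$ factor, which is exactly the distinction between the $\Omega(2^k/\veps^2)$ lower bound here and the $O(4^k/\veps^2)$ achievable upper bound --- closing or explaining this quadratic gap is the delicate part of the argument, and I would expect the informal statement to be made rigorous only in Appendix~\ref{sec:lower_bounds} with the precise ensemble and information-theoretic inequality spelled out there.
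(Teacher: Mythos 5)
Your proposal follows essentially the same route as the paper's proof: the paper formalizes wire cutting as a fixed single-copy measure-and-prepare strategy on the cut register, hides an $\veps$-sized perturbation of the maximally mixed state on a $2^k$-dimensional register (twirled by a Haar-random unitary), and applies Fano's inequality together with a $\chi^2$-divergence/Haar-integral bound showing each run yields only $O(\veps^2/2^k)$ bits of information --- exactly the discrimination-plus-Fano template you describe, citing the same classical-shadows lower bounds. The only cosmetic differences are that the paper uses a two-point ensemble (projectors onto half the space) under a Haar twirl rather than a packing of observables, and the single-copy i.i.d. restriction you flag as needing verification is simply built into the paper's definition of a valid wire-cutting procedure rather than derived.
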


We now explain how the procedure generalizes to multiple applications of the identity in Lemma~\ref{lem:main_channel_lem}. Consider a depth-$L$, $n$-qubit quantum circuit comprising rounds $1,\dots,L$ of commuting quantum gates which implements the channel $\mathcal{N}$. We say that a subset of the wires in the circuit are \textit{parallel} if and only if there exists an $i\in\{1,\dots,L-1\}$ such that each wire connects a gate applied in round $j\leq i$ to a gate in round $j\geq i+1$. In a circuit diagram, this corresponds to a set of wires which may be bisected by a single vertical line. Suppose that the circuit may be separated into disconnected sub-circuits by removing $\ell$ subsets of parallel wires, and let $k_1,\dots,k_\ell$ denote the number of wires in each subset. In Fig.~\ref{fig:multiple_cuts}, for example, a depth-4 circuit has $\ell=2$ subsets of $k_1$ and $k_2$ parallel wires, respectively, and is partitioned into three disconnected sub-circuits by removing these wires (depicted as inserting a measure-and-prepare operation). Then by repeated application of Lemma~\ref{lem:main_channel_lem}, we have
\begin{equation}
	\label{eq:general_fast_overhead}
	\begin{gathered}
		\Tr( O_f \mathcal{N} (\rho_0^{\otimes n}) ) = \\
		\expect{z_1,\dots,z_\ell} \left[\Tr(O_f\mathcal{N}^\prime_{z_1,\ldots,z_\ell} (\rho_0^{\otimes n})) \prod _{j=1} ^\ell (2 d_j+1)(-1)^{z_j} \right] .
	\end{gathered}
\end{equation}

\noindent Here, $z_1,\dots,z_\ell$ are independent Bernoulli random variables equal to $1$ with probability $d_j/(2d_j+1)$ where $d_j = 2^{k_j}$, for each $j\in [\ell]$, and $\mathcal{N}^\prime_{z_1,\dots,z_\ell}$ is the modified quantum channel which is implemented by substituting the channel $\Psi_{z_j}$ for the identity map at the $j^\text{th}$ location in the original circuit, for each $j\in [\ell]$. For fixed $z_1,\dots,z_\ell$, an unbiased estimator of the quantity $\Tr(O_f\mathcal{N}^\prime_{z_1,\dots,z_\ell}(\rho_0^{\otimes n}))$ can be obtained by measuring the output wires of the modified circuit in the computational basis. Furthermore, if removal of the wires in each of the $\ell$ subsets separates the circuit into $r$ disconnected sub-circuits, \textit{fragments}, then the circuit can be executed using $r$ quantum devices, each possibly with fewer qubits than the original.

\begin{figure}[h]
     \centering
     \begin{subfigure}[b]{0.44\textwidth}
         \centering
         \includegraphics[width=\textwidth]{./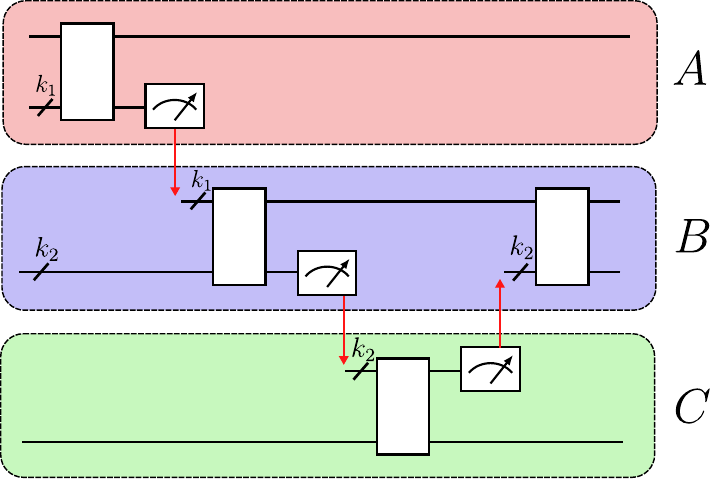}
         \caption{}
         \label{fig:multiple_cuts_a}
     \end{subfigure}
     \hfill
     \begin{subfigure}[b]{0.2\textwidth}
         \centering
         \vspace{1em}
         \includegraphics[width=\textwidth]{./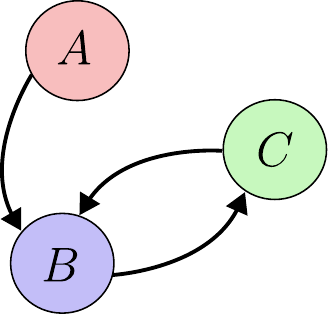}
         \caption{}
         \label{fig:multiple_cuts_b}
     \end{subfigure}
        \caption{(a) Example of a 4-qubit circuit with 3 wire cuts. Red lines represent classical communication between distinct devices. (b) Communication graph for circuit in Fig.~\ref{fig:multiple_cuts_a}. Since the communication graph has a cycle, classical communication between at least two distinct devices is required to execute the wire cutting procedure.}
        \label{fig:multiple_cuts}
\end{figure}

This setting introduces the following drawbacks: {(i)} the overhead is comparatively large when $k_j=1$ for every $j\in [\ell]$ (as opposed to when $\ell$ is small and each $k_j$ is large) and {(ii)} classical communication between multiple separate devices is required. The latter issue is avoided in Algorithm~\ref{alg:alg_1} since the first circuit in the sequence of circuits has no dependence on the results from the second circuit, and so its output wires could be measured simultaneously with the wires to be cut. These measurement results could then be stored and used at a later time to construct the estimator, while qubits on that same device are recycled in order to execute the second circuit. But consider the communication graph~\cite{peng2020simulating} representing the flow of communication between different circuits. Then this recycling procedure is not possible whenever the graph contains a cycle, which may be the case in general. In this case, the circuit sends its partial measurement results to another, and has to wait until receiving a measurement outcome before proceeding with its remaining gates.

\section{Circuit cutting for QAOA}
\label{sec:qaoa_application}

In this section, we combine the above results with observations related to the structure of QAOA circuits, and perform numerical simulations of quantum circuit cutting applied to the QAOA.

\subsection{Structure of QAOA circuits for circuit cutting}
\label{subsec:structure_qaoa_cutting}

Here, we establish sufficient conditions for the efficient application of quantum circuit cutting to QAOA circuits for the Max-Cut problem. A key aspect to consider is the wires which must be cut to separate a single layer of the QAOA circuit for Max-Cut into multiple smaller fragments. For a given input graph $G=(V,E)$ encoding an instance of the Max-Cut problem, the cost operator takes the form
\begin{equation}
	\label{eq:qaoa_cost}
	H_\mathcal{C} = \sum _{(i, j) \in E} Z_i Z_j \; ,
\end{equation}

\noindent where $Z _i$ is the single-qubit Pauli-Z operator acting on the $i$-th qubit. The Max-Cut solution corresponds to the state with the minimal expectation value of this cost operator. One may alternatively view this operator as a 2-local Hamiltonian whose \textit{interaction graph} is precisely the input graph. The QAOA consists of applying alternating layers of mixing unitaries based on single-qubit rotations and cost unitaries based on the above cost operator. Since the mixing unitary comprises single-qubit gates which do not influence qubit connectivity, this part of the circuit can safely be ignored for the purposes of this section. We may therefore focus on the cost unitary,
\begin{equation}
	\label{eq:phase_separation_product}
	U_{\mathcal{C}}(\gamma) = e^{-i\gamma H_\mathcal{C}},
\end{equation}

\noindent which is implemented by a product of commuting two-qubit $ZZ$ rotations, one for each edge in $E$. The fact that these operations commute also implies that the cost unitary can be implemented using multiple different circuits, corresponding to permutations of the terms in the product. This observation allows us to make the simplifying assumption that gates in the QAOA circuit are applied in an order which respects a chosen partition of the gates for the purpose of circuit cutting, i.e., all gates in a given partition of the edges are applied contiguously in the circuit.

In the specific case of circuit bipartitions, we make use of the idea of a \textit{balanced vertex separator} of an undirected graph $G$, which is a subset of vertices whose removal leaves two sets of disconnected vertices of roughly equal size, for example, of size at most $2|V|/3$.
\begin{theorem}[Imprecise, see restatement in Appendix~\ref{sec:qaoa_proofs}]
	\label{thm:qaoa_efficient_cutting}
	Suppose the graph $G=(V,E)$ with $|V|=n$ has some known balanced vertex separator $S\subseteq V$ such that $|S|=\kappa$. Then quantum computation with a $p$-layer QAOA circuit for Max-Cut on $G$ can be simulated to within accuracy $\veps$ using a pair of (non-entangled) quantum devices with $n-\Omega(n)$ qubits in time $\widetilde{O}(2^{\alpha p\kappa}/\veps^2)$ for some universal constant $\alpha$.
\end{theorem}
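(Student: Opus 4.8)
The plan is to realize the bipartition induced by the vertex separator $S$ as a sequence of $O(p)$ wire cuts, each applied to the $\kappa$ wires that carry the separator qubits, and then to invoke the multi-cut decomposition of Eq.~\eqref{eq:general_fast_overhead} to bound the resulting sample overhead. First I would use the separator to write $V = A'\sqcup B'\sqcup S$, where $A',B'$ are the two disconnected components of $G\setminus S$ with $|A'|,|B'|\le 2n/3$ and $|S|=\kappa$. Since $H_\mathcal{C}$ is a sum of commuting $ZZ$ terms, within each layer I would reorder the gates $e^{-i\gamma Z_iZ_j}$ so that, for each edge type (internal to $A'$, internal to $B'$, internal to $S$, crossing $A'$--$S$, and crossing $B'$--$S$), all gates of that type are applied contiguously, exactly as licensed by the commuting-gate observation preceding the theorem. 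Because $S$ is a vertex separator there are no $A'$--$B'$ edges, so the only couplings between the two halves are mediated by the separator register, and the single-qubit mixing layer factorizes across $A'$, $B'$, and $S$.

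Second I would identify the cut locations. Assigning $A'$ to device A and $B'$ to device B, I would treat the $\kappa$-qubit separator register as a ``bus'' that is handed back and forth: in each layer it is first acted on by the $A'$--$S$ gates on device A and then by the $B'$--$S$ gates on device B, with its internal and mixing gates being local wherever it currently resides. Inserting a wire cut on the $\kappa$ separator wires each time the bus is handed between devices breaks the entanglement it has accumulated and renders the global state separable across the $A$--$B$ bipartition. This amounts to $\ell = O(p)$ cuts, one or two per layer, each of dimension $d_j = 2^{\kappa}$, and each is implemented via Lemma~\ref{lem:main_channel_lem} using the fact (as in Theorem~\ref{thm:main_theorem_bipartition}) that random $\kappa$-qubit Cliffords form a unitary $2$-design and are efficiently sampleable and implementable. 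After all cuts are inserted the modified circuit $\mathcal{N}'_{z_1,\dots,z_\ell}$ factorizes into two fragments supported on $A'\cup S$ and $B'\cup S$, each using at most $2n/3+\kappa = n-\Omega(n)$ qubits whenever $\kappa$ is a sufficiently small fraction of $n$; the two devices never share entanglement and communicate only classically (the realized $z_j$ and the intermediate measurement outcomes).

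Third I would bound the overhead. Applying Eq.~\eqref{eq:general_fast_overhead} with $\ell=O(p)$ cuts of dimension $2^\kappa$ yields an unbiased estimator of $\Tr(O_f\mathcal{N}(\rho_0^{\otimes n}))$ whose magnitude is controlled by $\prod_{j}(2d_j+1) = 2^{O(p\kappa)}$; since $|f|\le 1$, each sample is bounded in absolute value by this quantity, so its variance is at most $2^{O(p\kappa)}$ and a standard concentration argument (as in the proof of Theorem~\ref{thm:main_theorem_bipartition}) shows that $O(2^{O(p\kappa)}/\veps^2)$ repetitions suffice for additive accuracy $\veps$. Folding in the per-sample cost of running the two fragment circuits and evaluating $f$ then gives the claimed runtime $\widetilde{O}(2^{\alpha p\kappa}/\veps^2)$ for a universal constant $\alpha$.

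I expect the main obstacle to be the second step: verifying that $\Theta(p)$ cuts --- rather than a number growing faster with $p$ --- genuinely suffice to decouple the two halves across all $p$ layers, and that the bus-handoff ordering is consistent with the commuting-gate reordering. This requires careful layer-by-layer bookkeeping of where the separator register resides, of which gates remain local to a single fragment, and of the synchronization needed so that the measure-and-prepare channels are correctly correlated between the two devices. As noted after Eq.~\eqref{eq:general_fast_overhead}, the communication graph here contains a cycle, so two spatially separated devices with classical communication are genuinely required, and the proof must make this synchronization explicit.
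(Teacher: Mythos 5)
Your proposal is correct and follows essentially the same route as the paper's proof in Appendix~\ref{sec:qaoa_proofs}: partition the edges according to the separator, exploit commutativity to order the gates contiguously, cut the $\kappa$ separator wires roughly twice per layer (the paper's count is $2p-1$ sets of parallel wires, matching your ``one or two per layer''), and conclude via Eq.~\eqref{eq:general_fast_overhead} and Hoeffding. The layer-by-layer bookkeeping you flag as the main obstacle is exactly what the paper formalizes in Lemmas~\ref{lem:multiple_layers} and~\ref{lem:qaoa_circuit_cutting}, with your ``bus handoff'' picture corresponding to the intra-layer and inter-layer wires in the communication graph.
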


We once again ignore classical time to compute the objective function of a given bit-string in the statement of this theorem. The proof is in Appendix~\ref{sec:qaoa_proofs}. The theorem follows by combining observations about the structure of single-layer QAOA circuits with Eq.~\eqref{eq:general_fast_overhead} as well as an intermediate result concerning how the number of wires to be cut scales with the number of layers $p$. We are thus able to introduce $\Omega(n)$ ``virtual qubits" in polynomial time on up to $\log$-depth circuits, so long as the underlying graph for the QAOA has a certain structure. In particular, this is a regime for which certain negative results on the overall effectiveness of the QAOA do not apply. For example, it is known that for $o(\log n)$-depth QAOA both the average- and worst-case performance is limited compared to the best classical algorithms~\cite{bravyi2020obstacles, farhi2020quantumsee, farhi2020quantumseetypical}.

Based on the above observations, choosing suitable instances of the Max-Cut problem for circuit cutting is related to finding graphs with small vertex separators. Concretely, we should pick a class of graphs whose vertices are clustered, such that the graph can be (vertex\nobreakdash-)separated into roughly equal-sized components. The number of vertices in the separator relates to the number of cuts required in the overall circuit, and thus the overhead in performing circuit cutting. We therefore suggest the class of graphs depicted in Fig.~\ref{fig:problem_graph} for benchmarking QAOA circuit cutting. These graphs are defined to be those which can be written as $r$ sets of $n$ vertices connected through $r-1$ groups of $k=O(1)$ vertices. These connecting groups of vertices form a vertex separator of size $k(r-1)$, and the overhead for computing expectation values scales exponentially with this quantity.

We may also consider the more general case in which there is no prior information about the structure of the graphs available. While the problem of finding the minimal balanced vertex separator is known to be NP-hard~\cite{bui1992vertexsepnphard}, there is a polynomial-time classical algorithm for finding an $O(\sqrt{\log \kappa})$ approximation to the optimal solution based on semi-definite programming relaxation~\cite{feige2005vertexsepapprox}. Here, $\kappa$ is the size of the minimal vertex separator. Hence, even without prior knowledge of the minimal vertex separator, one may employ such classical algorithms to search for approximate solutions on the given input graph, and the overhead in Theorem~\ref{thm:qaoa_efficient_cutting} holds with $\kappa$ the minimal vertex separator, up to a factor of $O(\sqrt{\log \kappa})$ in the exponent.

\subsection{Sampling from cut circuits}

In this section, we establish that it is possible to recover optimal solutions from circuit fragments which have been used to perform the QAOA. We emphasize that this \textit{sampling} aspect of the algorithm has not been addressed in any previous works on circuit cutting known to the authors. Our analysis leads to a simple suggestion that helps bridge the gap between procedures for circuit cutting and their application to problems of practical interest.

In the Max-Cut problem, the objective function maps bitstrings (partitions of vertices) to their cost (number of edges cut), $f:\{0,1\}^n\to [0,M]$. Here and throughout, we consider the graph $G=(V,E)$ with $n=|V|$ and $M=|E|\leq n^2$. The goal of the QAOA is to return a bitstring $x$ such that $f(x)$ is close to maximal, with high probability. The optimized QAOA circuit produces the state $\ket{\bm{\gamma},\bm{\beta}}$ according to parameters $\bm{\gamma},\bm{\beta}$. This results in the measurement distribution given by $q(x) = |\langle x|\bm{\gamma},\bm{\beta}\rangle|^2$ such that $\expct_{x\sim q} f(x)$ is ideally close to the maximum. Let us denote this expectation by $\mu$ from now on, and note that $\mu\in [0,M]$.

Given access to samples from $q$, it is straightforward to show that
\begin{equation}
	\label{eq:desirable_bitstring_lb}
	\Pr_{x\sim q}\left[x : f(x) \geq \mu \right] \geq \frac{1}{M}.
\end{equation}

In other words, after around $M$ trials one obtains a bit string $x$ which has cost at least $\mu$. Now consider the case where the QAOA circuit is being partitioned into disconnected fragments by removing $k$ wires in total. Using Eq.~\eqref{eq:general_fast_overhead} in the case where $\ell=k$ (individual wires are being cut), this expression implies
\begin{align}
	q(x) & = \Tr(\outerprod{x}{x}\mathcal{N}(\rho_0^{\otimes n}))                                                                  \\
	     & =5^k\expect{z_1,\dots,z_k}\left[\widetilde{q}(x | z_1,\dots,z_k)\prod_{j=1}^k(-1)^{z_j}\right]\label{eq:rewritten_prob}
\end{align}

\noindent for every $x \in \{0,1\} ^n$, where $z_1,\dots,z_k$ are independent Bernoulli random variables which determine the settings in a modified quantum circuit, as described in Sec.~\ref{sec:fast_circuit_cutting}, and $\widetilde{q}(x | z_1,\dots,z_k)$ is the probability of obtaining the outcome $x$ from the modified circuit conditioned on fixed settings $z_1,\dots,z_k$. Considering the absolute value of each of the terms in the expectation, the expectation in the right-hand side of Eq.~\eqref{eq:rewritten_prob} is at most
\begin{equation}
	\widetilde{q}(x) := \expect{z_1,\dots,z_k} \left[ \widetilde{q}(x | z_1,\dots,z_k) \right] .
\end{equation}

\noindent This is the marginal probability of observing the outcome $x\in\{0,1\}^n$ in the modified circuit. Therefore, it holds that
\begin{equation}
	\label{eq:single_cut_distr_lb}
	\widetilde{q}(S)\geq q(S)/5^k\quad \forall S\subseteq \{0,1\}^n,
\end{equation}

\noindent where $q(S):=\sum_{x\in S}q(x)$ and likewise for the distribution $\widetilde{q}$. Hence, using Eq.~\eqref{eq:desirable_bitstring_lb} we get
\begin{equation}
	\Pr_{x\sim \widetilde{q}}\left[x: f(x)\geq \mu\right]\geq \frac{1}{5^kM},
\end{equation}

\noindent which in turn implies one would have to wait at most $5^k$ times as long in expectation to sample a bit string $x$ such that $f(x)\geq \mu$ using the smaller circuit fragments. This is analogous to the overhead incurred for expectation values: in both cases, the cost of circuit cutting is a need to repeatedly execute the circuit by an amount that grows exponentially in the number of wires cut.

We remark that repeating the above analysis using the wire cutting procedure derived in the proof of Theorem~1 in Ref.~\cite{peng2020simulating} results in an improved overhead of $4^k$ for the sampling task, and as mentioned previously does not require synchronizing preparations to measurement outcomes in the modified circuits.

\begin{figure}[t]
	\centering
	\includegraphics[width=\linewidth]{./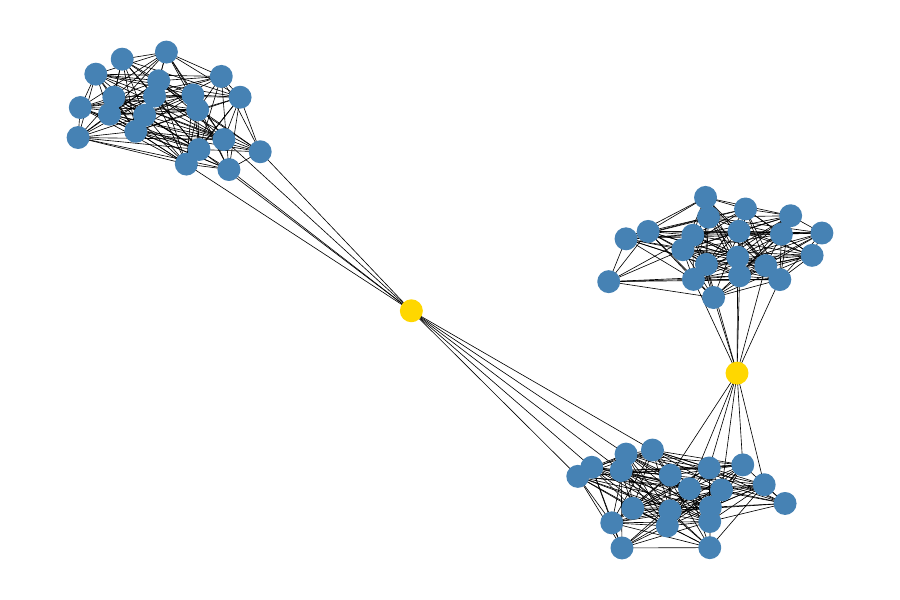}
	\caption{
		An example problem graph used as input for QAOA simulations. This graph contains $r=3$ clusters, $n=20$ nodes within each cluster and $k=1$ vertex separators between each cluster. Note that the cluster graphs are not complete; an edge exists between any two nodes in a cluster with probability of $0.7$. An edge exists between a node in a cluster and its neighbouring vertex separator with probability $0.3$.
	}
	\label{fig:problem_graph}
\end{figure}

\subsection{Numerical comparison for circuit bipartitions}
\label{sec:numerical_comparison}

\begin{figure*}[!t]
	\centering
	\begin{minipage}{.45\textwidth}
		\includegraphics[width=\linewidth]{./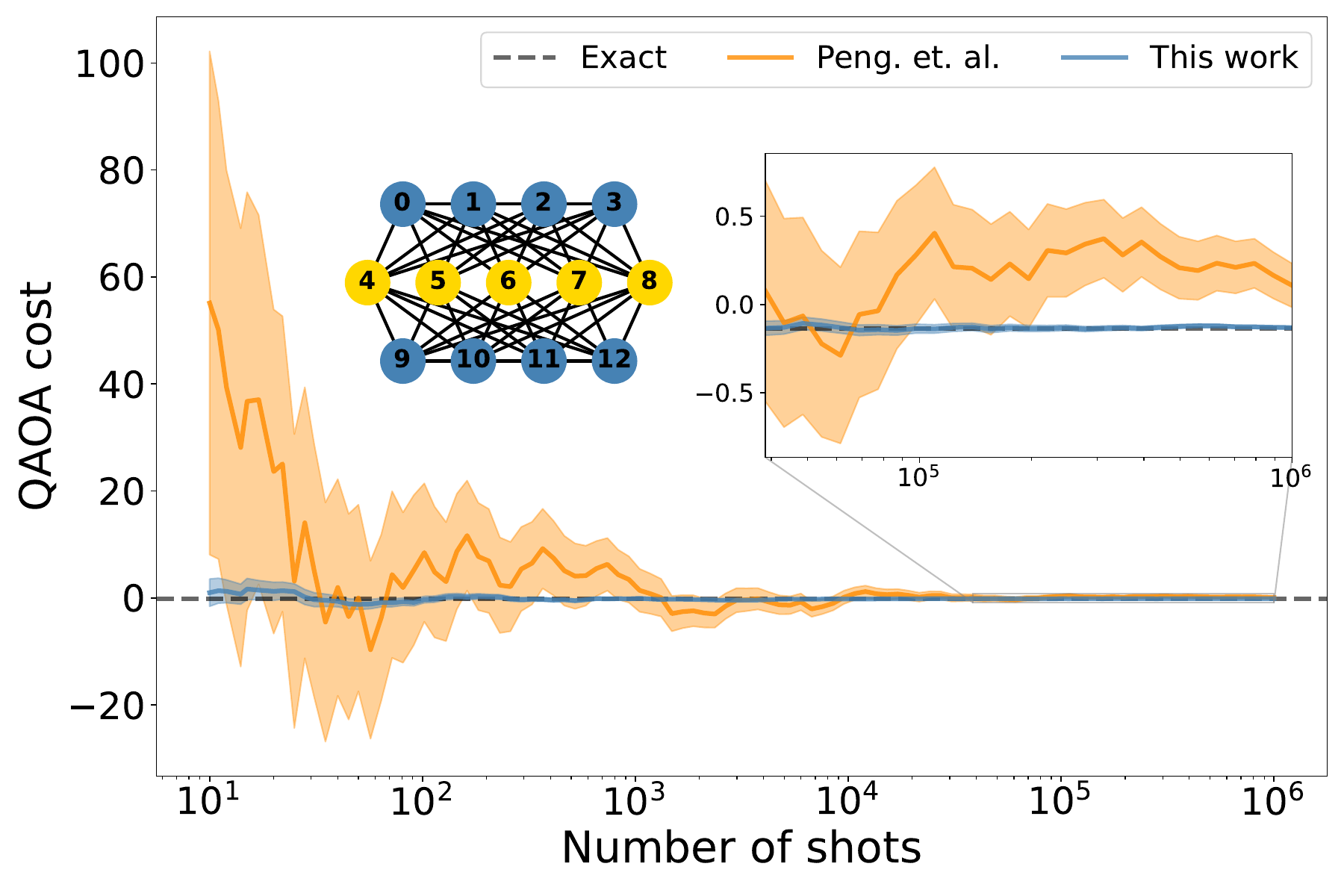}
	\end{minipage}\qquad
	\begin{minipage}{.45\textwidth}
		\includegraphics[width=\linewidth]{./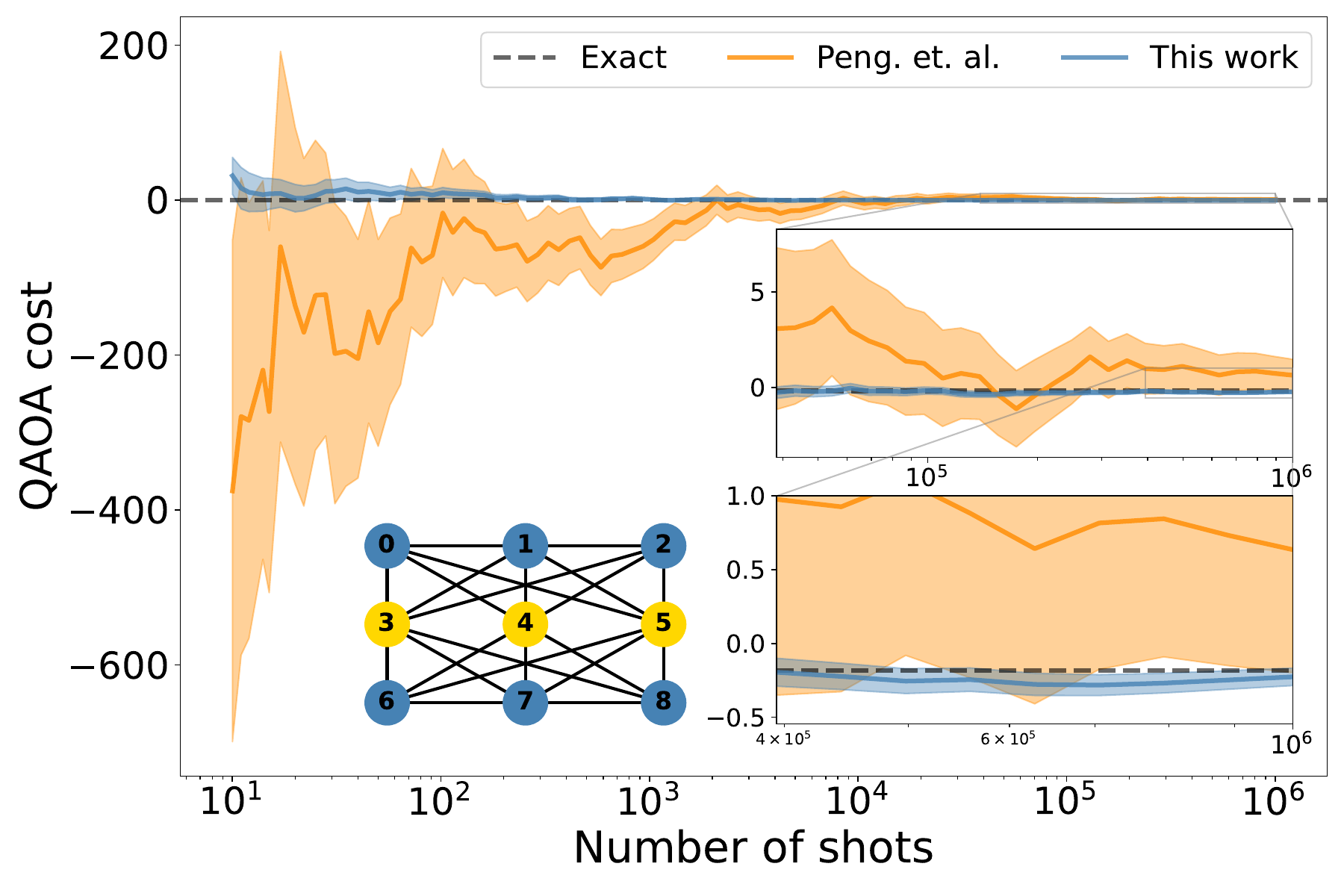}
	\end{minipage}
	\caption{
		Convergence of the QAOA cost expectation as a function of the number of shots. In both insets, yellow nodes indicate the wires which are cut. Shaded intervals represent one standard deviation from the mean.
		\textbf{Left:} $p=1$ layer, larger graph (13 qubits).
		\textbf{Right:} $p=2$ layers, medium-sized graph (9 qubits).
	}
	\label{fig:shots_vs_cost_small}
\end{figure*}

Here we numerically benchmark our circuit cutting method against the methods given in Ref.~\cite{peng2020simulating}, and compare to the theoretical bounds. The supporting code used to generate these numerical results can be found in \cite{GHrepo}. We focus on a family of graphs described in Sec.~\ref{subsec:structure_qaoa_cutting}, with specific examples depicted in Fig.~\ref{fig:shots_vs_cost_small}. These graphs can later be generalized, as seen on Fig.~\ref{fig:problem_graph}. The cost variances for these graphs are examined at QAOA depth $p=1$ and $p=2$ as functions of the number of required device shots. The results can be seen on Fig.~\ref{fig:shots_vs_cost_small}. All reference cost values have been calculated numerically exactly due to relatively small test graphs (maximum 13 qubits).

For all graphs considered, QAOA parameters~\cite{farhi2014quantum} $\bm{\gamma} = \left( \gamma _1, \ldots, \gamma _p \right)$ and $\bm{\beta} = \left( \beta_1, \ldots, \beta _p \right)$ have been fixed to their optimal values
\begin{equation}
	\bm{\gamma} ^*, \bm{\beta} ^* = \argmin \bra{\bm{\gamma}, \bm{\beta} } H_\mathcal{C} \ket{\bm{\gamma}, \bm{\beta} }, \;
\end{equation}

\noindent where $\ket{\bm{\gamma}, \bm{\beta} }$ is the QAOA output state given labeled parameters.

One of the main advantages of the circuit cutting method proposed in this work is the faster convergence to the correct value of the target observable. To meet the requirement of the cost function being in the interval $[-1,1]$, we rescale the expression given in Eq.~(\ref{eq:qaoa_cost}) by $M=|E|$, the number of edges in the graph. Simulations were performed using PennyLane~\cite{bergholm2018pennylane}, an open-source Python software framework for quantum differentiable programming, for both cutting methods.

As discussed in Appendix~\ref{sec:qaoa_proofs}, a QAOA circuit with $p$ layers will generally require at least $p$ separate cuts, one per QAOA layer. For $p \geq 2$, the measure-and-prepare nature of both methods introduces mid-circuit measurements that are not commonly supported on qubit simulators. To accommodate mixed quantum states that would be generated by circuit cutting on a real quantum device, one can either employ a mixed-state simulator or introduce auxiliary qubits. Our implementation of the randomized channel method uses the former, while the PennyLane implementation of the Pauli-based method of Peng et al.~\cite{peng2020simulating} employs the latter.

The randomized method introduced in Sec.~\ref{sec:fast_circuit_cutting} outperforms the method based on Pauli measurements~\cite{peng2020simulating} in terms of convergence speed and accuracy for all considered graph sizes, cut sizes, and QAOA depths. Improvements range from marginal to orders of magnitude. For larger cut sizes $k$, and especially at QAOA depth $p=2$, at $10^6$ shots, the randomized method offers reliable cost estimates, while the Pauli method still exhibits standard deviations larger than the target interval $[-1,1]$ as shown in Fig.~\ref{fig:shots_vs_cost_small}. We observe that the randomized channel approach converges faster in all QAOA instances checked, as shown on Fig.~\ref{fig:variance_vs_cut_size_small}.

\begin{figure}[t]
	\centering
	\includegraphics[width=0.85\linewidth]{./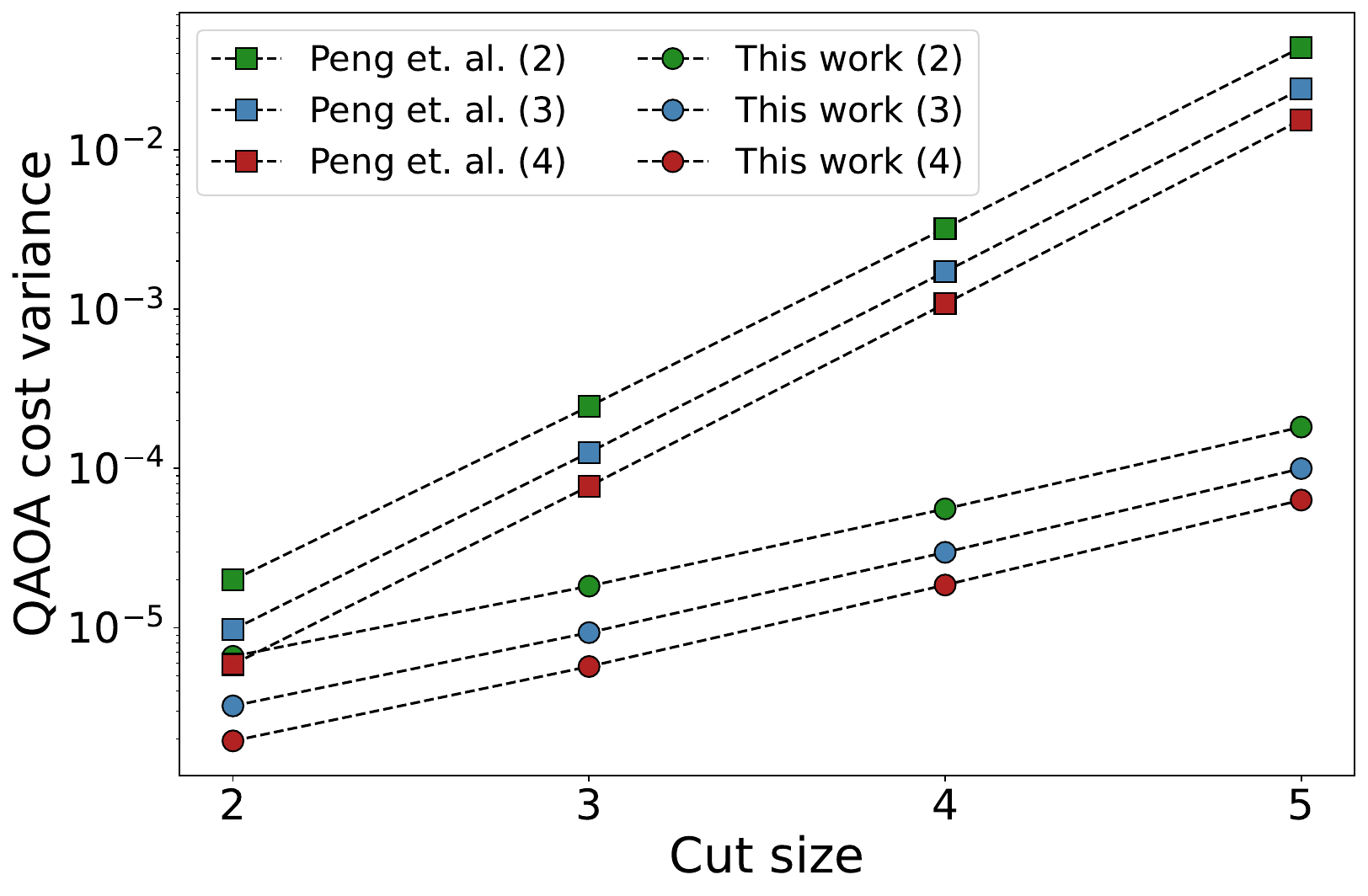}
	\caption{
		Numerical experiment results for QAOA cost variance as a function of the cut size at $10^6$ shots. Exponential scaling agrees with the bounds presented in Sec.~\ref{sec:fast_circuit_cutting}. The parenthesized number indicates the cut size for a given method at $p=1$.
	}
	\label{fig:variance_vs_cut_size_small}
\end{figure}

\subsection{Large-scale simulations}
\label{sec:large_qaoa_simulation}

\begin{figure*}
	\begin{minipage}{.45\textwidth}
		\includegraphics[width=\linewidth]{./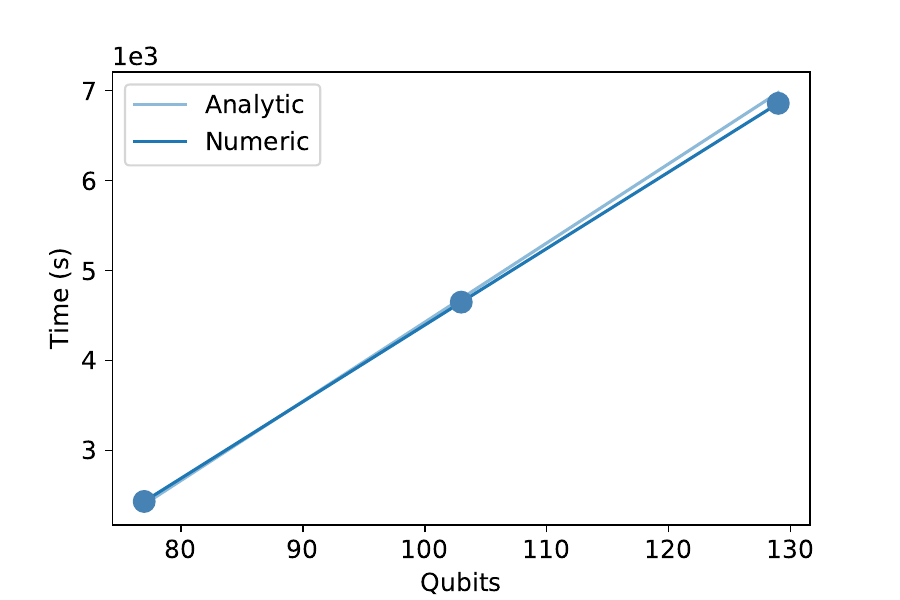}
		\caption{
			Execution time vs. the number of qubits for an evaluation of a QAOA circuit using circuit cutting. The simulation is distributed over $10$ GPU-nodes (4 GPUs per node) and the input problem graph uses the fixed parameters $p=2$, $n=25$, and $k=1$ while the number of clusters $r$ is varied to increase the number of qubits. An analytic line is also included based upon Eq.~\eqref{eq:n_configs}, where the number of unique sub-circuits $N(r)$ has been scaled by an inferred constant execution time of $1.325$ seconds per circuit.
		}
		\label{fig:time_vs_qubits}
	\end{minipage}\qquad\quad\hspace{0.05em}
	\begin{minipage}{.45\textwidth}
		\includegraphics[width=\linewidth]{./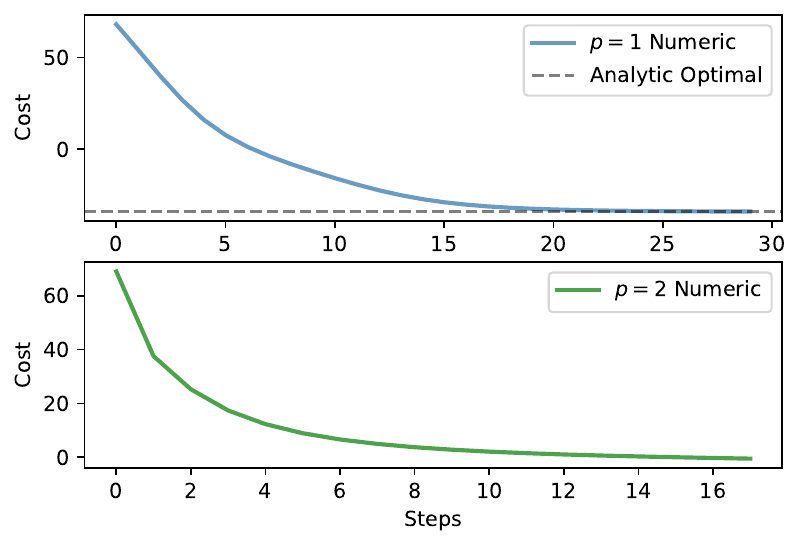}
		\caption{
			Cost vs. steps for a full 62-qubit QAOA optimization using circuit cutting. The input problem graph parameters are fixed at $n=20$, $r=3$, and $k=1$. For $p=1$ QAOA layers, the optimization was run over 2 GPU-nodes and took around 30 minutes, converging to the global minimum of $-33.8$. For $p=2$, the optimization was run over 10 GPU-nodes and took around 12 hours. In this case, a local minimum is obtained, although this may be improved upon by careful choice of initial circuit parameters~\cite{lykov2022sampling}.
		}
		\label{fig:cost_vs_steps_opt}
	\end{minipage}
\end{figure*}

Having demonstrated the faster convergence of our circuit cutting method over existing approaches for small-sized QAOA problems, we now switch focus toward showing the applicability of a circuit cutting procedure to larger problem sizes of above 50 qubits, a scale that is challenging for direct execution on near-term quantum hardware devices. Our objective is to provide a proof-of-principle evaluation and optimization of a large-scale QAOA problem when cut into smaller fragments of at most 30 qubits. To overcome the limitations of existing 30-qubit quantum hardware, we use a cluster of GPU-based simulators as a substitute.

This section uses the circuit cutting method introduced in Theorem 2 of Peng et al.~\cite{peng2020simulating, perlin_mle_circuit_cutting}, which involves performing process tomography for each circuit fragment and contracting the resulting tensors. The tensor-contraction-based method is fully supported in PennyLane~\cite{bergholm2018pennylane} and is compatible with both simulator- and hardware-based devices. When performed using simulation, this method is analogous to existing tensor network methods that aim to optimize the order of contractions in the network~\cite{gray2021hyper}.

As before, we focus on instances from the class of graphs introduced in Sec.~\ref{subsec:structure_qaoa_cutting} for the Max-Cut problem, consisting of $r$ clusters of $n$ nodes connected in a chain by smaller clusters of $k$ nodes, as exemplified in Fig.~\ref{fig:problem_graph}. The choice of problem parameters $r$, $n$ and $k$ as well as the number of QAOA layers $p$ determines the total number of unique sub-circuits $N$ required for circuit cutting as well as the maximum number of qubits $m$ among all sub-circuits, as discussed in Appendix~\ref{app:qaoa_complexity}.

The simulations detailed here were performed on the NERSC Perlmutter supercomputer using a cluster of NVIDIA A100 40GB GPUs running PennyLane's cuQuantum-enabled~\cite{fang2022nvcuq} Lightning-GPU simulator device. Each GPU node in Perlmutter is equipped with 4 NVIDIA A100 40GB GPUs, an AMD EPYC 7763 CPU, and networked with the HPE Slingshot-10 interconnect. Each GPU supports executing circuits of up to $m=30$ qubits. The $N$ circuit fragments generated due to cutting were distributed within the cluster using the Ray parallel execution library~\cite{moritz2018ray}. Supporting utility functions, data and methods used for this section are available in~\cite{GHrepo}.

We first detail the execution of QAOA circuits with parameters $p=2$, $n=25$ and $k = 1$ for a range of cluster numbers $r \in \{3, 4, 5 \}$, resulting in circuits with $77$, $103$, and $129$ qubits, respectively. Such circuits could not be executed directly on a $30$-qubit device, but by using circuit cutting we can break the circuit up into multiple smaller circuit fragments of number $N(r)$, depending on the choice of $r$. In this case, $N(3) = 1812$, $N(4) = 3540$ and $N(5) = 5268$, growing linearly with $r$ as can be seen in Appendix~\ref{app:qaoa_complexity}. To confirm the linear scaling numerically, we used simulation-based circuit cutting to evaluate the expectation value of the QAOA circuit with a randomly selected set of circuit parameters $\bm{\gamma}$ and $\bm{\beta}$ for all choices of $r$ and recorded the execution time, resulting in the plot shown in Fig.~\ref{fig:time_vs_qubits}. An analysis of the classical efficiency of our distributed GPU-based simulation is provided in Appendix~\ref{app:strong_scaling}.

We also investigate the ability to optimize a large-scale QAOA problem using simulated circuit cutting. Given the ability to evaluate a circuit, it is straightforward to extend access to gradients using methods like parameter shift~\cite{schuld2019evaluating} or finite difference, which require repeated circuit evaluation. We optimize a 62-qubit QAOA problem using the gradient descent algorithm, providing gradients by combining circuit-cutting-inspired parallelization with the multi-parameter finite-difference method. Using the class of clustered graphs with parameters $n = 20$, $r = 3$ and $k = 1$, we consider QAOA circuits with $p = 1$ and $p = 2$ layers.

In the case of $p=1$, an exact expression for the QAOA cost is available~\cite{Wang2018, medvidovic2021classical}, from which we are able to obtain the analytic optimal value to use as a baseline comparison. It is shown in Fig.~\ref{fig:cost_vs_steps_opt} that an optimization using simulated circuit cutting is able to achieve this optimal cost value. For $p=2$ there is no analytic expression known for the cost function, making the global minimum harder to find. Nevertheless, the cutting-based optimization results in a decreasing cost function and tends towards a value of $-0.51$. Although the result does not achieve a global minimum, since the $p=1$ case is able to reach a lower value of $-33.8$, this provides a demonstration of the ability to perform optimization using circuit cutting methods at larger depths $p$.

\section{Conclusion and outlook}
\label{sec:conclusion}

In this work, we presented a circuit cutting method based on randomized measurements that provides a quadratic runtime improvement over the current state-of-the-art for circuits where multiple neighbouring wires are cut simultaneously. Our method requires classical communication between circuit fragments to coordinate measurement outcomes and state preparation. For circuits with matrix-product-state structure, the synchronization between circuit fragments can be carried out by repeatedly executing smaller circuits, even on a single device. For general circuits, our algorithm requires multiple circuit executions across many devices, separated in space rather than time.

Our results raise the question of whether additional improvements may be possible in this setting. With this in mind, we derived an information-theoretic lower bound that is quadratically lower than our method. It is an open question whether such a lower bound is in fact achievable or whether it can be further tightened.

We chose the QAOA algorithm as a testbed for the new method. An emphasis was put on the practical and operational aspects for real devices -- we give an algorithm based on randomized measure-and-prepare channels as well as a way to sample the cut circuit. To the best of our knowledge, the latter has not been addressed in any previous works, which instead focus on estimating expectation values. Numerical simulations indicate a large speedup over previous state-of-the-art for circuit cutting.

Finally, as a more general exploration of circuit cutting methods, we give results on large-scale simulations of QAOA circuits over many GPUs acting as proxies for individual $\sim$ 30-qubit devices. We performed forward passes of up to 129 qubits and full optimization procedures for 62-qubit circuits. These results demonstrate that our software implementation of circuit cutting methods, built on top of PennyLane, indeed enables small-scale quantum devices to successfully emulate the results of a large circuit. Although additional difficulties may arise when employing quantum hardware instead of simulators, these numerical experiments are a testament to the practicality of large-scale circuit cutting workflows.

\section*{Acknowledgements}

We thank Zeyue Niu, Zain H. Saleem, Michael A. Perlin, and Yuri Alexeev for useful discussions.

This work was supported under DARPA project HR0011-21-9-0073, Quantum Compilation of Unitaries and Tensors and used resources of the National Energy Research Scientific Computing Center (NERSC), a U.S. Department of Energy Office of Science User Facility located at Lawrence Berkeley National Laboratory, operated under Contract No. DE-AC02-05CH11231 using NERSC award DDR-ERCAP0022246 under the QIS@Perlmutter program.
MM acknowledges support from the CCQ graduate fellowship in computational quantum physics. The Flatiron Institute is a division of the Simons Foundation.

\printbibliography

\onecolumngrid
\appendix

\section{Preliminaries}
\label{sec:preliminaries}

We explicitly define random variables throughout, including matrix-valued random variables where relevant. We let $x\in A$ denote a random variable $x$ which takes values in the set $A$. For any positive integer $N$ we let $[N]$ denote the set $\{1,\dots,N\}$. We use sans-serif font to denote subspaces of complex finite-dimensional Euclidean vector spaces, or operators acting on these spaces. For any positive integer $d$ we let $\mathsf{L}(\mathbb{C}^d)$ denote the set of square linear operators acting on $\mathbb{C}^d$, $\mathsf{H}(\mathbb{C}^d)$ the subset of operators in $\mathsf{L}(\mathbb{C}^d)$ which are Hermitian, $\mathsf{Psd}(\mathbb{C}^d)$ the subset of operators in $\mathsf{H}(\mathbb{C}^d)$ which are positive semidefinite, and $\mathsf{D}(\mathbb{C}^d)$ the subset of operators in $\mathsf{Psd}(\mathbb{C}^d)$ which have unit trace (quantum states). We also let $\mathsf{U}(\mathbb{C}^d)$ denote the set of unitary operators acting on $\mathbb{C}^d$ and $\mathsf{S}(\mathbb{C}^d)$ the set of unit-norm vectors in $\mathbb{C}^d$ (vector-representation of pure states). We distinguish between identity operators acting on $\mathbb{C}^d$ and on $\mathsf{L}(\mathbb{C}^d)$ by adopting the notation $\mathds{1}$ for the former and $\mathrm{id}$ for the latter, using subscripts to denote the space on which the operators act if this is not clear from context.
Our main theorem will make use of structured POVMs based on unitary $t$-designs (in the case where $t=2$), which we define below.
\begin{definition}[Unitary $t$-design]
	For positive integers $t,d > 0$, a \emph{unitary $t$-design} is a random unitary operator $U\in\mathsf{U}(\mathbb{C}^d)$ which for any $X\in\mathsf{L}(\mathbb{C}^d)^{\otimes t}$ satisfies
	\begin{align}
		\expct_{U} \left[U^{\otimes t} X (U^\dag)^{\otimes t}\right] = \int_{\mathsf{U}(\mathbb{C}^d)} V^{\otimes t} X (V^\dag)^{\otimes t}\ \mathrm{d}\mu(V)
	\end{align}
	where $\mu$ is the Haar measure on $\mathsf{U}(\mathbb{C}^d)$.
\end{definition}

In the case where $U$ in the above definition is a discrete random variable taking values in $\{U_1,U_2\dots,U_m\}\subset \mathsf{U}(\mathbb{C}^d)$ with probabilities $p_1,p_2,\dots,p_m$, respectively, one may associate with this unitary $t$-design the ensemble of unitaries $\{(p_i,U_i)\}_{i=1}^m$. We denote by $\Tr_j(X)$ the partial trace over the $j^\text{th}$ space, where it should be made clear from context which subspace the index $j$ corresponds to. We also use the notation $\Tr_{S}(\cdot)$ for a set $S\subset \mathbb{Z}_+$ to denote tracing out a subspace corresponding to multiple qubits on a quantum circuit.

\section{Proof of main result}\label{sec:proof_of_main_result}
\subsection{Proof of Lemma~\ref{lem:main_channel_lem}}
The lemma is restated for convenience. Recall that we have
\begin{align}
	\Psi_0(X) = \expct_U \left[\sum_{j=1}^d \bra{j}U^\dag X U\ket{j}U\outerprod{j}{j}U^\dag\right]\tag{\ref{eq:psi_0_defn}}
\end{align}

for every $X\in\mathsf{L}(\mathbb{C}^d)$, where $U\in\mathsf{U}(\mathbb{C}^d)$ is a unitary 2-design, and
\begin{align}
	\Psi_1(X) = \Tr(X)\mathds{1}/d \quad\tag{\ref{eq:psi_1_defn}}
\end{align}

\noindent for every $X\in \mathsf{L}(\mathbb{C}^d)$.
\newtheorem*{L1}{Lemma~\ref{lem:main_channel_lem}}
\begin{L1}
	Let $d$ be a positive integer and $\Psi_0$, $\Psi_1$ be the channels defined in Eqs.~\eqref{eq:psi_0_defn} and~\eqref{eq:psi_1_defn}, respectively, acting on $d$-dimensional states. Define the Bernoulli random variable $z\in \{0,1\}$ to be equal to $1$ with probability $d/(2d+1)$. It holds that
	\begin{align}
		\mathrm{id} = (2d+1)\expct_{z}\left[\ (-1)^z\  \Psi_z\right].
	\end{align}
\end{L1}

\begin{proof}
	Since we assume that $U\in\mathsf{U}(\mathbb{C}^d)$ is a unitary 2-design, for any $\ket{v}\in\mathsf{S}(\mathbb{C}^d)$ we have
	\begin{align}
		\expct_U\left[(U\outerprod{v}{v}U^\dag)^{\otimes 2}\right] & = \int_{\mathsf{U}(\mathbb{C}^d)} (V\outerprod{v}{v}V^\dag)^{\otimes 2}\ \mathrm{d}\mu(V)     \\
		                                                           & = \frac{1}{d(d+1)}\left(\mathds{1}\otimes \mathds{1} + W\right)\label{eq:2nd_moment_2_design}
	\end{align}
	where $W\in\mathsf{U}((\mathbb{C}^d)^{\otimes 2})$ is the swap operator, and the second line is a well-known identity that follows by exploiting the permutation-invariance of Haar integrals of this form (cf.\ Eq.~(7.179) in~\cite{watrous_2018}). Using Eq.~\eqref{eq:2nd_moment_2_design} as well as the linearity of trace, we may write the action of the first channel $\Psi_0$ on a linear operator $X\in\mathsf{L}(\mathbb{C}^d)$ as
	\begin{align}
		\Psi_0(X) & = \expct_U \left[\sum_{j=1}^d \bra{j}U^\dag X U\ket{j}U\outerprod{j}{j}U^\dag\right]                             \\
		          & =\Tr_1\left((X\otimes \mathds{1})\sum_{j=1}^d\expct_U\left[  (U\outerprod{j}{j}U^\dag)^{\otimes 2}\right]\right) \\
		          & = \frac{1}{d+1}\left[\Tr_1\left((X\otimes \mathds{1})\right)+\Tr_1\left((X\otimes\mathds{1})W\right)\right]      \\
		          & =\frac{1}{d+1}\left(\Tr(X)\mathds{1} + X\right), \label{eq:final_line_psi_0_expression}
	\end{align}
	where the final line follows from the identity $\Tr_1((A\otimes \mathds{1})W)=A$ for any square linear operator $A$. Also, $\Psi_1(X)=\Tr(X)\mathds{1}/d$ for every $X\in\mathsf{L}(\mathbb{C}^d)$ by definition. Substituting into Eq.~\eqref{eq:final_line_psi_0_expression}, we have
	\begin{align}
		\Psi_0(X) = \frac{1}{d+1}\left[d\Psi_1(X) + X\right]\quad\forall X\in\mathsf{L}(\mathbb{C}^d)
	\end{align}
	which, upon rearranging, gives
	\begin{align}
		X & = (d+1)\Psi_0(X) - d\Psi_1(X)                                          \\
		  & = (2d+1)\left(\frac{d+1}{2d+1}\Psi_0(X)-\frac{d}{2d+1}\Psi_1(X)\right) \\
		  & = (2d+1)\expct_z\left[(-1)^z\Psi_z(X)\right]
	\end{align}
	for every $X\in \mathsf{L}(\mathbb{C}^d)$. This proves the claim.
\end{proof}

\subsection{Bipartitioning pseudocode}\label{sec:pseudocode}
We describe in full detail the circuit cutting procedure used in Theorem~\ref{thm:main_theorem_bipartition} using the pseudocode below, in Algorithm~\ref{alg:alg_1}. As in the statement of the theorem, $C$ is an $n$-qubit circuit composed of sub-circuits $C_A$, $C_B$ acting on $A,B\subseteq [n]$ respectively, and $|A \cap B|=k$. Recalling our notation for the model of quantum computation introduced in Sec.~\ref{sec:fast_circuit_cutting}, for any function $f:\{0,1\}^n\to[-1,1]$ we have $O_f = \sum_{x\in\{0,1\}^n}f(x)\outerprod{x}{x}$. The overall action of the quantum circuit $C$ is to implement a quantum channel which we denote by $\mathcal{N}$, and the goal is to estimate $\Tr(O_f\mathcal{N}(\rho_0^{\otimes n}))$. Note also that we assume that $A\cup B = [n]$ for simplicity.

% ---------Algorithm----------
\begin{figure}[h]
	\begin{algorithm}[H]
		\begin{flushleft}
			\caption{
				Circuit cutting procedure in Theorem~\ref{thm:main_theorem_bipartition}.
			}
			\label{alg:alg_1}
			\hspace*{\algorithmicindent} \textbf{Input}:
			$f:\{0,1\}^n\to[-1,1]$, quantum circuit $C$ as in Theorem~\ref{thm:main_theorem_bipartition}, accuracy parameter $\veps$\\
			\hspace*{\algorithmicindent} \textbf{Output}:
			Random variable $Y$ s.t. $\expct[Y]=\Tr(O_f\mathcal{N}(\rho_0^{\otimes n}))$ and $|Y|\leq O(2^k)$ with certainty
		\end{flushleft}
		\begin{algorithmic}[1]
			\State $z\gets 1$ with probability $2^k/(2^{k+1}+1)$, $0$ otherwise
			\State Initialize wires $A$ to $\ket{0}_{A}$
			\State Apply circuit $C_A$ to wires $A$
			\If{$z=0$}
			\State $V\gets$ random Clifford operator on $k$ qubits
			\State Apply circuit $V^\dag$ to wires $A\cap B$
			\State $y\gets$ measurement of wires $A\cap B$
			\Else
			\State $y\gets \text{Unif}(\{0,1\}^k)$
			\EndIf
			\State $x_A\gets$ measurement of wires $[n]\backslash B$
			\State Initialize wires $B$ to $\ket{y}_{A\cap B}\ket{0}_{[n]\backslash A}$
			\State Apply circuit $C_B$ to wires $B$
			\State $x_B\gets$ measurement of wires $B$
			\State $x\gets x_Ax_B$
			\State $Y\gets (2^{k+1} + 1)(-1)^zf(x)$
			\State \Return $Y$
		\end{algorithmic}
	\end{algorithm}
\end{figure}
% ----------------------------

\subsection{Proof of Theorem~\ref{thm:main_theorem_bipartition}}
\label{sec:proof_of_bipartition_theorem}

We first restate the result for convenience.
\newtheorem*{T1}{Theorem~\ref{thm:main_theorem_bipartition}}
\begin{T1}
	Let $C$ be a size-$m$ quantum circuit acting on $n$ qubits which is a composition of circuits $C_A,C_B$ acting non-trivially on sets of qubits $A,B\subseteq [n]$, respectively. If $|A\cap B|\leq k$, then quantum computation using $C$ can be simulated to within accuracy $\veps$ in time $O(4^k (m+k^2)/\veps^2)$ by a quantum circuit acting on at most $\max\{|A|, |B|\}$ qubits using the procedure described in Algorithm~\ref{alg:alg_1}.
\end{T1}

\begin{proof}
	Define $d:=2^k$. To prove the claim regarding time and space resources required, first note that the random variable $Y=(2^{k+1} + 1)(-1)^z f(x)$ in Algorithm~\ref{alg:alg_1} is bounded in magnitude by $|Y|\leq 2d+1$ with certainty. Hence by Hoeffding's Inequality the sample mean of $N=O(d^2/\veps^2) = O(4^k/\veps^2)$ iterations of Algorithm~\ref{alg:alg_1} suffices to estimate the expectation value of $Y$ to within accuracy $\veps$ with high probability. Now, suppose we have a device comprising $\max\{|A|,|B|\}$ qubits. If $z=0$, the procedure up to and including Line 11 can be performed on this device in time $O(m+k^2)$. This follows from the fact that there exists an efficient procedure to sample a depth-$O(k\log(k))$ circuit which implements a random Clifford operator, running in time $O(k^2)$~\cite{vandenberg2021simple}. If $z=1$ then no sampling of random Cliffords is required, and the procedure yields $x_A$ and $y$ in time $O(m)$. In Line 12, the same device may then be re-initialized to the state $\ket{y}\ket{0}$. Finally, the application of the circuit $C_B$ takes time $O(m)$ for a total runtime of $O(m+k^2)$ to produce a single sample.

	It remains to show that $Y$ is an unbiased estimator of $\Tr(O_f\mathcal{N}(\rho_0^{\otimes n}))$, where $\mathcal{N}$ is the channel implemented by the circuit $C$. Consider the channels $\Psi_0$ and $\Psi_1$ defined in Eqs.~\eqref{eq:psi_0_defn} and~\eqref{eq:psi_1_defn}, respectively. Lines 5-7 and 12 perform the action of the channel $\Psi_0$ on qubits in $A\cap B$. This follows from the fact that the random Clifford operators comprise a 3-design~\cite{kueng2015stabilizer3design, zhu2017clifford, webb2017clifford}. Similarly, lines 9 and 12 perform the action of the channel $\Psi_1$ on qubits in $A\cap B$. Deferring the measurement in Line 11, the procedure therefore has the following action on the initial state for a fixed $z\in\{0,1\}$:
	\begin{align}
		\rho_0^{\otimes n} & \to (\mathcal{U}_A\otimes \mathrm{id}_{[n]\backslash A})(\rho_0^{\otimes n})                                                                                                                                                     & \text{(Line 3)}     \\
		                   & \to (\mathrm{id}_{[n]\backslash B}\otimes \Psi_z \otimes \mathrm{id}_{[n]\backslash A})(\mathcal{U}_A\otimes \mathrm{id}_{[n]\backslash A})(\rho_0^{\otimes n})                                                                  & \text{(Lines 4-10)} \\
		                   & \to \tau^{(z)}:=(\mathrm{id}_{[n]\backslash B}\otimes \mathcal{U}_B)(\mathrm{id}_{[n]\backslash B}\otimes \Psi_z \otimes \mathrm{id}_{[n]\backslash A})(\mathcal{U}_A\otimes \mathrm{id}_{[n]\backslash A})(\rho_0^{\otimes n}). & \text{(Line 13)}
	\end{align}
	Here, $\mathcal{U}_A$ and $\mathcal{U}_B$ are unitary channels acting on wires (qubits) $A$ and $B$ respectively, such that the overall action of the circuit is given by
	\begin{align}
		\mathcal{N}(\rho_0^{\otimes n}) =  (\mathrm{id}_{[n]\backslash B}\otimes \mathcal{U}_B)(\mathcal{U}_A\otimes \mathrm{id}_{[n]\backslash A})(\rho_0^{\otimes n}).
	\end{align}
	Let $z$ now be the random variable defined in Line 1 of Algorithm~\ref{alg:alg_1}, by Lemma~\ref{lem:main_channel_lem} and the linearity of channels we have
	\begin{align}
		\expct[(2d+1)(-1)^z\tau^{(z)}] = \mathcal{N}(\rho_0^{\otimes n}).
	\end{align}
	Therefore, by the linearity of trace it holds that
	\begin{align}
		\expct[Y] & = \expct \left[\expct \left[(2d+1)(-1)^zf(x)\ |\ z\right]\right] \\
		          & =\Tr\left(O_f\expct\left[(2d+1)(-1)^z\tau^{(z)}\right]\right)    \\
		          & =\Tr\left(O_f \mathcal{N}(\rho_0^{\otimes n})\right)
	\end{align}
	as required.
\end{proof}

\section{An information-theoretic lower bound}
\label{sec:lower_bounds}

Let us begin by formally defining what constitutes a procedure for wire cutting. Let $\calH_A$, $\calH_B$, and $\calH_C$ be complex Euclidean spaces of dimension $d_A$, $d_B$, $d_C$ corresponding to registers $A$, $B$, $C$, and let $\calH := \calH_A\otimes\calH_B\otimes \calH_C$. We require that any procedure for wire cutting succeeds at the following task using a fixed measurement strategy on the register $B$. Given {(i)} a unitary channel $\mathcal{U}_2$ acting on states of the register $(B,C)$, {(ii)} a diagonal observable $O_f\in\mathsf{H}(\calH)$ such that $\norm{O_f}\leq 1$, and {(iii)} an unknown state of the register $(A,B,C)$ of the form $\rho\otimes \sigma$ where $\rho\in\mathsf{D}(\calH_A\otimes \calH_B)$ and $\sigma\in \mathsf{D}(\calH_C)$, a procedure for wire cutting first performs a measurement on the register $B$, leaving the register $A$ in some reduced state $\rho_A$. Then, the register $B$ is prepared in the state $\tau_y$ conditioned on the outcome $y$ having been obtained. Next, the unitary channel $\mathcal{U}_2$ is applied to the state on the composite register $(B,C)$. Finally all registers are measured in the standard basis, resulting in an outcome $z\in [d_Ad_Bd_C]$. A successful procedure for wire cutting outputs an $\veps$-accurate estimate of the expectation value $\Tr(O_f (\mathds{1}_{A}\otimes \mathcal{U}_2)(\rho\otimes \sigma))$ using $N$ i.i.d.\ copies of $z$.

Note that a successful procedure for this task using $N\leq O(\kappa^2/\veps^2)$ iterations is implied by the existence of a decomposition of the identity of the form
\begin{align}
	\mathrm{id} = \sum_{i}a_i \Phi_i
\end{align}

\noindent for some collection of measure-and-prepare channels $\Phi_i:\mathsf{L}(\calH_B)\to\mathsf{L}(\calH_B)$ and real $a_i$ satisfying
\begin{align}
	\sum_i |a_i|= \kappa.
\end{align}

\noindent This follows from the fact that such a decomposition could be used to perform the wire cutting task described above, resulting in an unbiased estimator of the expectation value which is bounded in magnitude by $\kappa$. Hence, our lower bound on $N$ will also imply a lower bound on the quantity $\kappa$.

To prove our lower bound, let us consider the case where $d_A=d_C=1$. Then we must be able to determine $\Tr(O_f\mathcal{U}_2(\rho))$ to within error $\veps$ using a procedure of the form described above. Namely, we measure $\rho\in\mathsf{D}(\calH_B)$ using some POVM $\{M_y\}$ and prepare the state $\tau_y\in\mathsf{D}(\calH_B)$ conditioned on receiving the outcome $y$. We then apply $\mathcal{U}_2$ on this state and measure it in the computational basis receiving outcome $z\in [d_B]$, and this procedure is repeated $N$ times using $N$ identical copies of the state $\rho$.

Since there is only one non-trivial register we will drop the subscript $B$ from now on. Define
\begin{align}
	\rho_j:= \frac{2\veps}{d}\Pi_j + (1-\veps)\frac{\mathds{1}}{d},\quad j=0,1
\end{align}

\noindent where
\begin{align}
	\Pi_0 := \sum_{i=1}^{d/2}|i\rangle\langle i|,\qquad \Pi_1 := \mathds{1}-\Pi_0.
\end{align}

\noindent These states will allow us to construct a difficult state discrimination problem which reduces to a successful procedure for wire cutting. The lower bound on the number of iterations required will then follow by bounding the information gained from the measurement statistics in a single iteration.

To accomplish this, we make use of elementary facts from information theory (see, for example, Ref.~\cite{Cover2005}.) We adopt standard notation for the mutual information and conditional mutual information between two random variables, as well as the entropy and conditional entropy of a random variable. We let $D\infdivx{P}{Q}$ denote the relative entropy between two distributions $P$, $Q$ such that $\mathrm{supp}(P)\subseteq \mathrm{supp}(Q)$. We also make use of the $\chi^2$-divergence between two discrete distributions $P$, $Q$ over the sample space $\mathcal{X}$ defined through
\begin{align}
	D_{\chi^2}\infdivx{P}{Q} := \sum_{x\in\mathcal{X}}Q(x)\left(\frac{P(x)}{Q(x)}-1\right)^2 = \left(\sum_{x\in\mathcal{X}}\frac{P(x)^2}{Q(x)}\right)-1.
\end{align}

\noindent In addition to more standard facts from information theory, we require the following two results, which we state without proof. The first is a special case of Lemma~6 in Ref.~\cite{buscemi2010quantumcapacity}, while the second is a looser version of Eq.~(5) in Ref.~\cite{igal2016divergence}.
\begin{lemma}\label{lem:marginal_minimizes_rel_ent}
	Let $P_{X,Y}$ be a discrete distribution over the sample space $\mathcal{X}\times \mathcal{Y}$, and let $P_X$, $P_Y$ be the marginal distributions over $\mathcal{X}$, $\mathcal{Y}$, respectively. For any pair of discrete distributions $Q_1$ over $\mathcal{X}$ and $Q_2$ over $\mathcal{Y}$, it holds that
	\begin{align}
		D\infdivx{P_{X,Y}}{P_X\otimes P_Y}\leq D\infdivx{P_{X,Y}}{Q_1\otimes Q_2}.
	\end{align}
\end{lemma}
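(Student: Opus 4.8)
The plan is to reduce the claim to the non-negativity of relative entropy by establishing an exact ``Pythagorean'' decomposition of $D\infdivx{P_{X,Y}}{Q_1\otimes Q_2}$. The strategy is to insert the product of marginals $P_X\otimes P_Y$ as an intermediate reference distribution inside the logarithm defining the relative entropy, thereby isolating the dependence on the unknown reference $Q_1\otimes Q_2$ from the intrinsic correlation structure of $P_{X,Y}$.

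Concretely, I would begin from the definition and write, for each $(x,y)$ in the support of $P_{X,Y}$,
\begin{align}
	\log\frac{P_{X,Y}(x,y)}{Q_1(x)Q_2(y)} = \log\frac{P_{X,Y}(x,y)}{P_X(x)P_Y(y)} + \log\frac{P_X(x)}{Q_1(x)} + \log\frac{P_Y(y)}{Q_2(y)}.
\end{align}
Multiplying by $P_{X,Y}(x,y)$ and summing over $(x,y)$, the first term yields $D\infdivx{P_{X,Y}}{P_X\otimes P_Y}$ (which is precisely the mutual information $I(X;Y)$), while in the second and third terms one summation variable can be marginalized out using $\sum_y P_{X,Y}(x,y)=P_X(x)$ and $\sum_x P_{X,Y}(x,y)=P_Y(y)$. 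This produces the identity
\begin{align}
	D\infdivx{P_{X,Y}}{Q_1\otimes Q_2} = D\infdivx{P_{X,Y}}{P_X\otimes P_Y} + D\infdivx{P_X}{Q_1} + D\infdivx{P_Y}{Q_2}.
\end{align}

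The inequality then follows immediately: both $D\infdivx{P_X}{Q_1}\geq 0$ and $D\infdivx{P_Y}{Q_2}\geq 0$ by Gibbs' inequality (the non-negativity of relative entropy), so the right-hand side is at least $D\infdivx{P_{X,Y}}{P_X\otimes P_Y}$, which is the claim. I do not anticipate a genuine obstacle here; the argument is essentially routine once the decomposition is spotted, and the only point requiring care is the support condition. If $\mathrm{supp}(P_{X,Y})\not\subseteq\mathrm{supp}(Q_1\otimes Q_2)$ then the right-hand side of the lemma is $+\infty$ and the inequality holds trivially, so one may assume the supports are compatible, in which case every term above is finite and the manipulation of the logarithms is justified.
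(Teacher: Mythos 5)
Your proof is correct and complete. One remark on the comparison you were asked for: the paper itself does not prove this lemma at all — it is stated without proof, as a special case of Lemma~6 in Ref.~\cite{buscemi2010quantumcapacity} — so your argument is not an alternative to an internal proof but rather a self-contained replacement for an external citation. What you give is the standard ``Pythagorean'' identity
\begin{align}
	D\infdivx{P_{X,Y}}{Q_1\otimes Q_2} = D\infdivx{P_{X,Y}}{P_X\otimes P_Y} + D\infdivx{P_X}{Q_1} + D\infdivx{P_Y}{Q_2},
\end{align}
which immediately yields the claim by nonnegativity of the two marginal relative-entropy terms, and which moreover identifies exactly when equality holds ($Q_1=P_X$, $Q_2=P_Y$), i.e., it proves the stronger statement that the product of marginals is the unique minimizer among product references. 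Your treatment of the support condition is also the right one: if $\mathrm{supp}(P_{X,Y})\not\subseteq\mathrm{supp}(Q_1\otimes Q_2)$ the right-hand side is $+\infty$ and there is nothing to prove, while in the compatible case one gets $\mathrm{supp}(P_X)\subseteq\mathrm{supp}(Q_1)$ and $\mathrm{supp}(P_Y)\subseteq\mathrm{supp}(Q_2)$ for free (if $P_X(x)>0$ then $P_{X,Y}(x,y)>0$ for some $y$, forcing $Q_1(x)>0$), so every ratio in the logarithmic decomposition is well-defined on $\mathrm{supp}(P_{X,Y})$. The only point you leave implicit is that splitting the sum into three pieces is legitimate even on a countably infinite sample space; this is harmless because each of the three pieces is itself a relative entropy (or mutual information) and hence a well-defined quantity in $[0,\infty]$, so no $\infty-\infty$ cancellation can occur.
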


\begin{lemma}
	\label{lem:kl_chi_squared_ineq}
	Let $P$, $Q$ be discrete distributions over the sample space $\mathcal{X}$ such that $\mathrm{supp}(P)\subseteq\mathrm{supp}(Q)$. It holds that
	\begin{align}
		D\infdivx{P}{Q}\leq \frac{1}{\ln(2)} D_{\chi^2}\infdivx{P}{Q}.
	\end{align}
\end{lemma}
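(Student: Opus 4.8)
The plan is to reduce the inequality to two elementary facts: Jensen's inequality for the concave logarithm, together with the bound $\ln(1+t)\le t$. First I would fix the convention that the relative entropy $D\infdivx{P}{Q}$ is measured in bits, which is what accounts for the factor $1/\ln(2)$ in the statement. Writing the relative entropy in terms of the natural logarithm, $D\infdivx{P}{Q}=\frac{1}{\ln(2)}\sum_{x}P(x)\ln(P(x)/Q(x))$, it then suffices to establish the base-$e$ statement $\sum_{x}P(x)\ln(P(x)/Q(x))\le D_{\chi^2}\infdivx{P}{Q}$ and divide through by $\ln(2)$ at the end.

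Next I would view the sum $\sum_{x}P(x)\ln(P(x)/Q(x))$ as the expectation $\expct_{x\sim P}[\ln(P(x)/Q(x))]$, interpreting the likelihood ratio $P(x)/Q(x)$ as a random variable distributed according to $P$. The support hypothesis $\mathrm{supp}(P)\subseteq\mathrm{supp}(Q)$ guarantees this ratio is well defined on every $x$ with $P(x)>0$, while terms with $P(x)=0$ contribute nothing under the usual convention. Since $\ln$ is concave, Jensen's inequality gives
$$\expct_{x\sim P}\left[\ln\frac{P(x)}{Q(x)}\right]\le \ln\expct_{x\sim P}\left[\frac{P(x)}{Q(x)}\right]=\ln\left(\sum_{x}\frac{P(x)^2}{Q(x)}\right).$$
I would then recognize the argument of the logarithm as $1+D_{\chi^2}\infdivx{P}{Q}$, using exactly the identity $D_{\chi^2}\infdivx{P}{Q}=\big(\sum_x P(x)^2/Q(x)\big)-1$ already recorded in the preliminaries. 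Finally, applying the elementary inequality $\ln(1+t)\le t$ (valid for all $t\ge 0$, and in particular at $t=D_{\chi^2}\infdivx{P}{Q}\ge 0$) yields $\ln(1+D_{\chi^2}\infdivx{P}{Q})\le D_{\chi^2}\infdivx{P}{Q}$; chaining the two bounds and dividing by $\ln(2)$ completes the argument.

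There is no serious obstacle here, since the result is a standard comparison between $f$-divergences. The only points requiring a little care are the bookkeeping around the support condition, so that the single application of Jensen's inequality is fully justified on the sub-support where $P(x)>0$, and keeping the logarithm base consistent so that the constant $1/\ln(2)$ emerges correctly rather than being absorbed silently. Both of these are routine, so I would expect the proof to be short.
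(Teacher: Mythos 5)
Your proof is correct, but there is nothing in the paper to compare it against: the paper states Lemma~\ref{lem:kl_chi_squared_ineq} \emph{without proof}, justifying it only as ``a looser version of Eq.~(5)'' in Ref.~\cite{igal2016divergence}. Your self-contained derivation is the standard route to this comparison of $f$-divergences: Jensen's inequality for the concave logarithm applied to the likelihood ratio $P(x)/Q(x)$ under $x\sim P$, the identity $\sum_x P(x)^2/Q(x) = 1 + D_{\chi^2}\infdivx{P}{Q}$ (which matches the paper's definition of the $\chi^2$-divergence), and finally $\ln(1+t)\le t$ for $t\ge 0$. In fact, your argument passes through the sharper intermediate bound $D\infdivx{P}{Q}\le \frac{1}{\ln 2}\ln\bigl(1+D_{\chi^2}\infdivx{P}{Q}\bigr)$, which is essentially the inequality the paper cites, before loosening it to the stated form; so your proof simultaneously fills in the omitted argument and makes transparent in exactly what sense the paper's lemma is the ``looser version'' of the cited result. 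The two points of care you flag are handled properly: the support hypothesis ensures the likelihood ratio is finite wherever $P(x)>0$ (and rules out an infinite $\chi^2$-divergence arising from $Q(x)=0<P(x)$), terms with $P(x)=0$ contribute nothing, and nonnegativity of $D_{\chi^2}\infdivx{P}{Q}$ licenses the final elementary inequality. What the paper's citation buys is brevity; what your proof buys is a two-line, fully elementary verification that could be included as a remark.
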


\noindent Let $U\in\mathsf{U}(\calH)$ be a Haar-random unitary operator and $x\in\{0,1\}$ be uniformly random. Suppose we are given as input to our problem the unitary channel $\mathcal{U}: X\mapsto U^\dag XU$, observable $\Pi_0\in\mathsf{H}(\calH)$, and the state $U\rho_xU^\dag\in\mathsf{D}(\calH)$. Then the wire cutting procedure produces standard basis outcomes $z_1,\dots,z_N\in [d]$ which are independent given $x$ and $U$, as well as intermediate measurement results $y_1,\dots,y_N$ which are independent given $x$ and $U$. The final output of the procedure after the $N$ iterations enables one to correctly determine the value of $x\in\{0,1\}$ with high probability. This follows since without loss of generality we can assume the output will be an $\veps/3$-accurate estimate of the quantity $\Tr(\Pi_0 U^\dag U\rho_xU^\dag U) = \Tr(\Pi_0 \rho_x)$, with high probability. Now, because
\begin{align}
	\Tr(\Pi_0 \rho_0) = \frac{1}{2} + \frac{\veps}{2}\qquad \textnormal{and}\qquad \Tr(\Pi_0\rho_1) = \frac{1}{2} - \frac{\veps}{2}
\end{align}

\noindent such an $\veps/3$-accurate estimate of $\Tr(\Pi_0\rho_x)$ allows one to infer the value of $x$ with certainty. By Fano's Inequality~\cite{fano1966}, the mutual information between the random variables obtained from identical iterations of this procedure $z:=(z_1,\dots,z_N)$ and the choice of state $x$ satisfies
\begin{align}\label{eq:mut_inf_lb}
	I(x:z)\geq \Omega(1).
\end{align}
On the other hand, we can upper bound the left-hand side of the above as follows. We have
\begin{align}
	I(x:z) & \leq I(x:z,U)                                                            & \text{(Data-Processing Inequality)} \\
	       & = I(x:U) + I(x:z | U)                                                    & \text{(chain rule for mut.\ inf.)}  \\
	       & = I(x:z_1,\dots,z_N|U)                                                   & \text{($x$ and $U$ indep.)}         \\
	       & =\sum_{i=1}^N I(x:z_i|U,z_{i-1},\dots,z_1)                               & \text{(chain rule for mut.\ inf.)}  \\
	       & = \sum_{i=1}^N H(z_i|U,z_{i-1},\dots,z_1) - H(z_i|x,U,z_{i-1},\dots,z_1) & \text{(defn.\ of mut.\ inf.)}       \\
	       & \leq \sum_{i=1}^N H(z_i|U) - H(z_i | x, U)                               &                                     \\
	       & = \sum_{i=1}^N I(x:z_i | U)\label{eq:mut_inf_chain_rule}
\end{align}

\noindent where the second-to-last line follows because conditioning reduces entropy, and $z_i$ is independent of $z_{i-1},\dots,z_1$ given $x$ and $U$ by the assumptions made in our criteria for what constitutes a wire cutting procedure. Hence, it remains to bound the individual mutual information terms.

Conditioning on $U$ we find $x\to y_i\to z_i$ forms a Markov Chain, using the fact that for fixed $y_i$ and $U$ we defined $z_i$ to be the outcome obtained from measuring the state $U^\dag\tau_{y_i}U$ in the computational basis. By Data-Processing Inequality we have
\begin{align}
	I(x:z_i|U)\leq I(x:y_i|U).
\end{align}
Define $\tilde{y}$ to be the random variable $y_i$ conditioned on $U$, which has conditional distribution given by $P_{\tilde{y}|x=x^\prime}(y) = \Tr(M_y U\rho_{x^\prime}U^\dag)$ for each $x^\prime \in\{0,1\}$ and marginal distribution $P_{\tilde{y}}(y) = \sum_{x^\prime=0}^1 P_{\tilde{y}|x=x^\prime}(y)$. Also, let $P_x = (\frac{1}{2},\frac{1}{2})$, let $P_{x,\tilde{y}}$ be the joint distribution, and define a distribution $Q$ given by $Q(y)=\Tr(M_y)/d$. Then the right-hand side of the above is
\begin{align}
	\expct_{U}\left[D \infdivx{P_{x,\tilde{y}}}{P_{x}\otimes P_{\tilde{y}}}\right] & \leq  \expct_U\left[D \infdivx{P_{x,\tilde{y}}}{P_{x}\otimes Q}\right]                                                              & \text{(by Lemma~\ref{lem:marginal_minimizes_rel_ent})} \\
	                                                                               & =\frac{1}{2}\expct_{U}\left[ D \infdivx{P_{\tilde{y}|x=0}}{Q} + D \infdivx{P_{\tilde{y}|x=1}}{Q}\right]\label{eq:two_terms_rel_ent} &                                                        \\
	                                                                               & = \expct_{U}\left[ D \infdivx{P_{\tilde{y}|x=0}}{Q}\right]\label{eq:kl_ineq}.                                                       &
\end{align}

\noindent In the second line we made use of the unitary invariance of the Haar measure, which implies that the two terms in the sum in Eq.~\eqref{eq:two_terms_rel_ent} are equal by the fact that there exits a unitary operator $V\in \mathsf{U}(\calH)$ such that $\rho_1 = V\rho_0 V^\dag$. We then have
\begin{align}
	\expct_{U}\left[ D \infdivx{P_{\tilde{y}|x=0}}{Q}\right]
	 & \leq \frac{1}{\ln(2)}\expct_{U}\left[D_{\chi^2}\infdivx{P_{\tilde{y}|x=0}}{Q}\right]                           & \text{(by Lemma~\ref{lem:kl_chi_squared_ineq})} \\
	 & = \frac{1}{\ln(2)}\left[\left(\sum_y \frac{\expct_U\left[P_{\tilde{y}|x=0}(y)^2\right]}{Q(y)}\right)-1\right]. &
\end{align}

\noindent The Haar integral in the numerator of each of the terms above is evaluated implicitly in previous arguments for lower bounds of this type~\cite{haah2017tomography, Huang2020predicting}, and explicitly as Lemma~3.2.6 in~\cite{lowe2021learning}, by which it holds that
\begin{align}
	\expct_{U}\left[P_{\tilde{y}|x=0}(y)^2\right] \leq Q(y)^2\left(1+\frac{\veps^2}{d+1}\right)
\end{align}

\noindent for every outcome $y$, and hence
\begin{align}
	\frac{1}{\ln(2)}\left[\left(\sum_y \frac{\expct_U\left[P_{\tilde{y}|x=0}(y)^2\right]}{Q(y)}\right)-1\right] \leq \frac{\veps^2}{\ln(2)(d+1)}\label{eq:indiv_inf_term_ub}.
\end{align}

\noindent In summary, we have
\begin{align}
	I(x:z_i|U)\leq \frac{\veps^2}{\ln(2)(d+1)}
\end{align}

\noindent for every $i\in [N]$ so that, by Eq.~\eqref{eq:mut_inf_chain_rule} and the lower bound in Eq.~\eqref{eq:mut_inf_lb} we have
\begin{align}
	\Omega(1)\leq I(x:z)\leq \frac{N\veps^2}{\ln(2)(d+1)}
\end{align}
\noindent which holds if and only if $N = \Omega(d/\veps^2)$. For a system comprising $k$ qubits, this is $\Omega(2^k/\veps^2)$. This proves the bound in Theorem~\ref{thm:lower_bound}.

%As discussed previously, this implies a lower bound on an optimization problem defined as follows. Let $\mathcal{M}\mathcal{P}(\mathbb{C}^d)$ denote the set of measure-and-prepare channels acting on $d$-dimensional quantum states. Define $p^\star$ through:
%\begin{equation}
%	\begin{aligned}
%		p^\star:=\inf_{a_i,\Phi_i} \quad & \sum_i|a_i|       %                                             \\
%		\textnormal{s.t.} \quad          & \mathrm{id} - \sum_i a_i\Phi_i = 0                             \\
	%	                                 & \Phi_i \in \mathcal{M}\mathcal{P}(\mathbb{C}^d)\quad \forall i \\
		%                                 & a_i\in\mathbb{R} \quad \forall i.
	%\end{aligned}
%\end{equation}

\noindent It holds that $p^\star\geq \Omega(\sqrt{d})$.

\section{Proof of Theorem~\ref{thm:qaoa_efficient_cutting}}
\label{sec:qaoa_proofs}

In this appendix we offer proofs for Theorem~\ref{thm:qaoa_efficient_cutting} as well as some related results in the form of lemmas. We state a notational convenience at the outset: in order not to confuse undirected input graphs for the Max-Cut problem with the (multi-) graphs representing quantum circuits, we reserve the term \textit{fragments} to refer to subgraphs of the latter, obtained by removing a subset of wires. Recall that we are interested in these fragments since applying the identity in Lemma~\ref{lem:main_channel_lem} (or in the proof of Theorem~1 of Ref.~\cite{peng2020simulating}) to each of the removed wires allows us to simulate the circuit using smaller devices corresponding to the fragments.

The QAOA ansatz with $p$ layers is a circuit whose structure is repeated $p$ times in succession. To aid our discussion on the relationship between problem instances and the complexity of circuit cutting, it will be helpful to reduce the scope of our analysis to the case of a single layer. The following observation, which is not specific to the QAOA ansatz, motivates this choice. For an $n$-qubit circuit, we define the \textit{support} of a set of its gates $A$ to be the subset of qubits upon which they act non-trivially, $\text{supp}(A)\subseteq [n]$. Suppose this circuit is separated into fragments $F_1,\dots,F_r$ upon removing a subset of wires. Then we refer to the support of the subset of gates belonging to the fragment $F_j$ as the support of the fragment itself, denoted $\text{supp}(F_j)\subseteq [n]$, for each $j\in [r]$. This quantity represents the size of each of the sub-circuits formed in the circuit cutting procedure.

\begin{lemma}
	\label{lem:multiple_layers}
	If a circuit can be separated into $r$ fragments $F_1,\dots, F_r$ by removing $\ell$ sets of at most $\kappa$ parallel wires (see Sec.~\ref{sec:fast_circuit_cutting}), then a composition of $p$ layers of that circuit can be separated into $r$ fragments $F_1^\prime,\dots, F_r^\prime$ by removing at most $(2p-1)\ell$ sets of at most $\kappa$ parallel wires. Furthermore,  $\textnormal{supp}(F_j) = \textnormal{supp}(F_j^\prime)$ for each $j\in [r]$.
\end{lemma}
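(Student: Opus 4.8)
The plan is to construct the $p$-layer fragments explicitly by stacking the single-layer fragments, and then to count the wires that must be cut to keep the stacked fragments disconnected. Write the $p$-layer circuit as the sequential composition of $p$ identical copies of the single layer, and let $F_j^{(m)}$ denote the copy of the single-layer fragment $F_j$ sitting inside the $m$-th layer, for $m\in[p]$ and $j\in[r]$. I would then define the candidate $p$-layer fragments by $F_j':=\bigcup_{m=1}^{p}F_j^{(m)}$. Since each layer acts on the same $n$ qubits and $F_j^{(m)}$ is a verbatim copy of $F_j$, we have $\textnormal{supp}(F_j^{(m)})=\textnormal{supp}(F_j)$ for every $m$, and hence $\textnormal{supp}(F_j')=\bigcup_m\textnormal{supp}(F_j^{(m)})=\textnormal{supp}(F_j)$, which settles the support claim immediately.

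The substance is the cut count. I would classify every wire joining two distinct fragments $F_i'$, $F_j'$ (with $i\neq j$) according to whether its endpoints lie in the same layer or in consecutive layers, using that wires only connect gates within a layer or across a single layer boundary. The intra-layer wires are handled by simply replicating, inside each of the $p$ layers, the original $\ell$ cut sets that separate $C$; each such set crosses a vertical line interior to its layer and so remains a set of at most $\kappa$ parallel wires in the full circuit. This accounts for $p\ell$ cut sets. It then remains to sever, at each of the $p-1$ layer boundaries, the wires that reconnect the fragments across the seam.

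For the boundary wires I would first observe that on a qubit $q$ a boundary wire must be cut precisely when the fragment containing the last single-layer gate on $q$ differs from the fragment containing the first single-layer gate on $q$; this can happen only when $q$ is a \emph{shared} qubit, i.e.\ one appearing in some single-layer cut set. The key geometric point is that all such boundary wires are parallel: because the full circuit is a sequential composition, every gate of layer $m$ precedes every gate of layer $m+1$, so a single vertical line placed at the seam between layers $m$ and $m+1$ bisects all of them. Since the union of the $\ell$ single-layer cut sets contains at most $\ell\kappa$ distinct qubits, the boundary wires at a given seam live on at most $\ell\kappa$ qubits and can therefore be partitioned into at most $\lceil\ell\kappa/\kappa\rceil=\ell$ sets of at most $\kappa$ parallel wires each. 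Over the $p-1$ seams this contributes at most $(p-1)\ell$ cut sets, for a grand total of $p\ell+(p-1)\ell=(2p-1)\ell$.

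Finally I would verify that removing exactly this collection of wires yields the fragments $F_1',\dots,F_r'$: any wire between distinct fragments is either intra-layer or crosses a single seam, and both kinds have been cut, while each $F_j'$ remains internally connected across seams through its uncut (non-shared) qubits. I expect the main obstacle to be the bookkeeping at the boundaries — specifically, arguing cleanly that the reconnecting wires are exactly those on shared qubits, that they are all mutually parallel, and that they repackage into at most $\ell$ admissible cut sets — rather than the support equality or the intra-layer replication, which are essentially immediate.
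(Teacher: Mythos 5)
Your proposal is correct and follows essentially the same argument as the paper's proof: replicate the single-layer fragment labelling across all $p$ layers (giving $p\ell$ intra-layer cut sets), then observe that seam-crossing wires occur only on qubits carrying a single-layer cut wire, are mutually parallel at each of the $p-1$ seams, and hence repackage into at most $\ell$ sets of at most $\kappa$ wires per seam. The only cosmetic difference is that the paper phrases the boundary-wire argument in terms of the communication graph, while you argue directly on wires and shared qubits.
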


\begin{proof}
	First consider the (intact) single-layer circuit, where each gate is labelled by the fragment to which it belongs when it is separated. That is, we assign a label $j$ to each gate if it is in the fragment $F_j$ for some $j\in [r]$. We may construct a new directed multi-graph by contracting all the edges (wires) incident on gates with the same label. This is the \textit{communication graph} in Ref.~\cite{peng2020simulating}, so we adopt this terminology here as well. The edges in the communication graph represent those edges which separate the original circuit into $r$ fragments when removed. We then choose an identical partition of the gates in each of the $p$ repeated layers, resulting in a sequence of $p$ identical circuits whose gates are labelled using the labels $1,\dots, r$. The partition of the gates in the $p$-layer circuit is then defined by these labels, i.e., the fragment $F_j^\prime$ contains all gates labelled by $j$, for each $j\in [r]$. It is clear that $\textnormal{supp}(F_j)=\textnormal{supp}(F_j^\prime)$ for all $j\in [r]$.

	Let us now turn to the bound on the number and size of sets of parallel wires in the $p$-layer communication graph. There is a contribution of $p\ell$ sets of $\kappa$ parallel wires in the communication graph since each layer adds $\ell$ sets of $\kappa$ inter-fragment wires. It remains to bound the number of wires in the communication graph connecting gates in two different layers. Suppose the layers are labelled $1,\dots,p$. For any $i\in [p-1]$, the wires corresponding to the $a^\text{th}$ qubit connect gates with different labels between layers $i$ and $i+1$ if only if the final gate acting upon the $a^\text{th}$ qubit in the first layer has a different label from the first gate acting upon the $a^\text{th}$ qubit. This happens only if there is a wire in the single-layer communication graph which corresponds to the $a^\text{th}$ qubit. Repeating this reasoning for each qubit leads to a contribution to the total number of multi-graph edges that is bounded from above by $\ell \kappa$ for each $i\in [p-1]$. Furthermore, the wires connecting gates in layer $i$ to layer $i+1$ are parallel by definition, and so these can be arbitrarily partitioned into at most $\ell$ sets of at most $\kappa$ wires for each $i\in [p-1]$. In total, we have at most $p\ell + (p-1)\ell = 2p-1$ sets of at most $\kappa$ parallel wires which, upon removal, yield the desired fragments $F_1,\dots, F_r$.
\end{proof}

\begin{lemma}
	\label{lem:qaoa_circuit_cutting}
	Suppose there exists a partition of the edge set $E(G)$ into $r$ disjoint subsets $E_1,\dots,E_r$ and let $g_1,\dots,g_r$ denote the subgraphs of $G$ corresponding to these subsets of edges, respectively. Also, let $1 \leq \gamma \leq {r\choose 2}$ denote the number of pairs of indices $i,j\in [r]$, $i < j$ such that $V(g_i)\cap V(g_j) \neq \emptyset$. Then there exists a single-layer QAOA circuit for Max-Cut on $G$ that can be separated into fragments $F_1,\dots,F_r$ by removing $\gamma$ sets of at most $\kappa$ parallel wires, where $\kappa:=\max_{i\neq j}|V(g_i)\cap V(g_j)|$ and $\mathrm{supp}(F_j) = V(g_j)\ \forall j\in [r]$.
\end{lemma}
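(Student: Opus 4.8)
The plan is to exploit the observation, already made in the main text, that the single-layer cost unitary $U_{\mathcal{C}}(\gamma)=\prod_{(i,j)\in E}e^{-i\gamma Z_iZ_j}$ is a product of mutually commuting two-qubit gates, one per edge of $G$. This freedom to permute the gates is exactly what lets us realize a circuit whose cut structure matches the prescribed edge partition. First I would define the circuit by assigning to fragment $F_j$ precisely the $ZZ$-rotations indexed by the edges in $E_j$. Because $g_j$ is the subgraph carrying the edge set $E_j$, its vertex set $V(g_j)$ is by definition the set of qubits that these gates touch, so the identity $\mathrm{supp}(F_j)=V(g_j)$ is immediate from the assignment and settles the final claim. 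Since the gates commute, any ordering implements the same cost layer, so a valid single-layer QAOA circuit with this fragment labelling certainly exists.

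Next I would fix a global linear order $1,2,\dots,r$ on the fragments and, using commutativity, arrange every gate of $F_i$ ahead of every gate of $F_j$ whenever $i<j$. Contracting all intra-fragment wires then produces the communication (multi-)graph of Lemma~\ref{lem:multiple_layers} on super-vertices $F_1,\dots,F_r$, whose edges are the inter-fragment wire segments. For a fixed qubit $v$, write $S_v=\{t : v\in V(g_t)\}$; on the wire carrying $v$ the incident gates appear sorted by label, so its cut points are the idle segments joining \emph{consecutive} elements of $S_v$, giving exactly $|S_v|-1$ of them. Grouping all cut points by the pair of labels they join yields bundles indexed by pairs $(i,j)$ that are consecutive in $S_v$ for at least one $v$; each such pair satisfies $V(g_i)\cap V(g_j)\neq\emptyset$ and is therefore one of the $\gamma$ overlapping pairs, while the bundle itself is a subset of $V(g_i)\cap V(g_j)$ and so has at most $\kappa$ wires. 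Parallelism of each bundle is then automatic: every wire in the bundle for $(i,j)$ is idle throughout the interval between block $i$ and block $j$, so placing the cut at the boundary immediately after block $i$ makes all of them cross a single vertical line, matching the definition from Sec.~\ref{sec:fast_circuit_cutting}. Removing these at most $\gamma$ bundles deletes every inter-fragment wire and hence disconnects the circuit into $F_1,\dots,F_r$.

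The step I expect to require the most care is the bookkeeping for a qubit $v$ that lies in three or more fragments. The temptation is to charge $v$ a separate cut for each of the $\binom{|S_v|}{2}$ overlapping pairs it induces, which would badly overcount; the point is rather that the linear order separates the fragments through $v$ \emph{transitively}, so $v$ is physically severed only between the $|S_v|-1$ consecutive pairs, and a non-adjacent overlapping pair such as $(i,j)$ with some intervening block contributes no direct cut at all. Verifying that this grouping is consistent across \emph{all} shared qubits simultaneously---so that a single global order places every bundle parallel at once---is the one combinatorial check that needs to be done carefully, and it is precisely where the argument remains agnostic to whether the resulting communication graph is acyclic (the cyclic case affecting only the later question of multi-device synchronization, not the count $\gamma$ or the cut sizes $\kappa$ established here).
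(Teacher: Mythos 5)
Your proposal is correct and follows essentially the same route as the paper's proof: use commutativity of the $ZZ$ rotations to apply the blocks $E_1,\dots,E_r$ contiguously, identify fragment $F_j$ with the gates of $E_j$ (so $\mathrm{supp}(F_j)=V(g_j)$ is immediate), and group the inter-block wires into bundles indexed by overlapping pairs, each bundle bisected by a single vertical line after block $i$ and of size at most $\kappa$, with at most $\gamma$ nonempty bundles. Your $S_v$/consecutive-pair bookkeeping merely spells out the paper's observation that, with this gate ordering, each shared qubit contributes at most one wire directly connecting a gate in $E_i$ to a gate in $E_j$.
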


\begin{proof}
	Note that in the single-layer QAOA circuit for Max-Cut, the vertex set $V$ represents the qubits in the circuit. The edge set $E$ represents the commuting 2-qubit gates acting between qubits. Consider the circuit in which all gates in $E_1$ are applied first, followed by $E_2$ and so on. Any pair of wires connecting gates in $E_i$ to gates in $E_j$ for some $i< j$ are then parallel by construction. Furthermore, the set of qubits acted upon by gates in both $E_i$ and $E_j$ is given by $V(g_i)\cap V(g_j)$, and for each element in this set there is at most one wire connecting a gate in $E_i$ to a gate in $E_j$ due to the ordering of gates we have chosen. Hence, there are at most $\gamma$ nonempty sets of parallel wires which connect gates in $E_i$ to $E_j$ for some $i < j$, each of which has at most $\kappa$ elements. Removing these wires produces the fragments $F_1,\dots, F_r$ whose gates are given by $E_1,\dots, E_r$, respectively, such that $\text{supp}(F_j)=V(g_j)$ for every $j\in [r]$ as claimed.
\end{proof}

We now prove the theorem using the above lemmas. For any $\delta\in [1/2,1]$ a $\delta$-balanced vertex separator of a graph $G=(V,E)$ is a subset of vertices $S\subseteq V$ whose removal leaves two sets of disconnected vertices each of size at most $\delta |V|$.
\newtheorem*{C1}{Theorem~\ref{thm:qaoa_efficient_cutting}}
\begin{C1}
	Suppose the graph $G=(V,E)$ with $|V|=n$ has some known $2/3$-balanced vertex separator $S\subseteq V$ such that $|S|=\kappa$, and assume $\kappa\leq n/6$. Then quantum computation with a $p$-layer QAOA circuit for Max-Cut on $G$ can be simulated using a pair of (non-entangled) quantum devices each with at most $5n/6$ qubits in time
	\begin{align}
		O\left(\frac{4^{(2p-1)(\kappa+2)}\textnormal{poly}(p\cdot n)}{\veps^2}\right).
	\end{align}
\end{C1}

\begin{proof}
	We partition the edge set $E$ in the following way. Let $\widetilde{g}_1$, $\widetilde{g}_2$ be the disconnected subgraphs which remain upon removing the vertices $S$, and assume without loss of generality that
	\begin{align}
		|V(\widetilde{g}_2)|\leq \lfloor n/2 \rfloor \leq |V(\widetilde{g}_1)| \leq \lfloor 2n/3\rfloor.
	\end{align}

	\noindent Let $g_2$ be the subgraph of $G$ corresponding to the vertices $V(\widetilde{g}_2)\cup S$. We then define a partition of $E$ into disjoint sets $E_1$, $E_2\subseteq E$ by letting $E_2 := E(g_2)$ and $E_1:=E\backslash E_1$. Let $g_1$ be the subgraph of $G$ corresponding to the edge set $E_1$. We have
	\begin{align}
		                & |V(g_j)|\leq \lfloor2n/3 \rfloor + \kappa \leq \lfloor5n/6 \rfloor, \quad j=1,2 \\
		\text{and}\quad & V(g_1)\cap V(g_2) = S.
	\end{align}

	\noindent By Lemma~\ref{lem:qaoa_circuit_cutting} a single layer of the QAOA circuit ansatz can be separated into fragments $F_1$, $F_2$ by removing a single set of at most $\kappa$ parallel wires, and $|\text{supp}(F_j)| = |V(g_j)|\leq \lfloor 5n/6\rfloor$ is a bound on the number of qubits in the support of fragment $j$. Also, by Lemma~\ref{lem:multiple_layers}, $p$ layers of this circuit ansatz can be separated\footnote{That the wires to be removed in order to achieve this separation can be efficiently determined follows from the explicit procedure for doing so outlined in the proofs of Lemmas~\ref{lem:multiple_layers} and~\ref{lem:qaoa_circuit_cutting}.} into fragments $F_1^\prime$, $F_2^\prime$ by removing at most $2p-1$ sets of $\kappa$ parallel wires, with $|\text{supp}(F_j^\prime)|\leq\lfloor 5n/6\rfloor$ for $j=1,2$. Then, since the cost operator $\mathcal{C}$ for the Max-Cut problem is a diagonal observable the expected cost $\langle \mathcal{C}\rangle$ can be written as
	\begin{align}
		\langle \mathcal{C}\rangle = \expect{z_1,\dots, z_{2p-1}} \left[\langle \mathcal{C}^\prime_{z_1,\dots,z_{2p-1}}\rangle \prod_{j=1}^{2p-1}(2^{\kappa+1}+1)(-1)^{z_j}\right]
	\end{align}

	\noindent by Eq.~\eqref{eq:general_fast_overhead}. Here, $\langle \mathcal{C}^\prime_{z_1,\dots,z_{2p-1}}\rangle$ is the expected value of a suitably modified circuit in which the channel $\Psi_{z_i}$ is applied to the $i^\text{th}$ group of parallel wires for each $i\in [2p-1]$, and $z_1,\dots,z_{2p-1}$ are Bernoulli random variables as described in Lemma~\ref{lem:main_channel_lem}. The modified circuits can be executed on two quantum devices with at most $\max_j |\text{supp}(F_j^\prime)|\leq\lfloor 5n/6\rfloor$ qubits and classical communication. Using the random Clifford strategy described in the proof of Theorem~\ref{thm:main_theorem_bipartition}, each of these circuits has at most $\text{poly}(p\kappa |E|)$ gates. Measuring in the computational basis allows one to construct an unbiased estimator of the expectation value $\langle \mathcal{C}\rangle$ in a manner analogous to Line 7 in Algorithm~\ref{alg:alg_1}. This estimator is bounded in magnitude by $(2^{\kappa+1}+1)^{(2p-1)}$ with certainty, so by Hoeffding's Inequality $(2^{\kappa+1}+1)^{2(2p-1)}/\veps^2\leq 4^{(2p-1)(\kappa+2)}/\veps^2$ repetitions of this procedure suffices to estimate $\langle \mathcal{C}\rangle$ to within additive error $\veps$, with high probability.
\end{proof}

\begin{figure}
	\centering
	\includegraphics[width=0.7\textwidth]{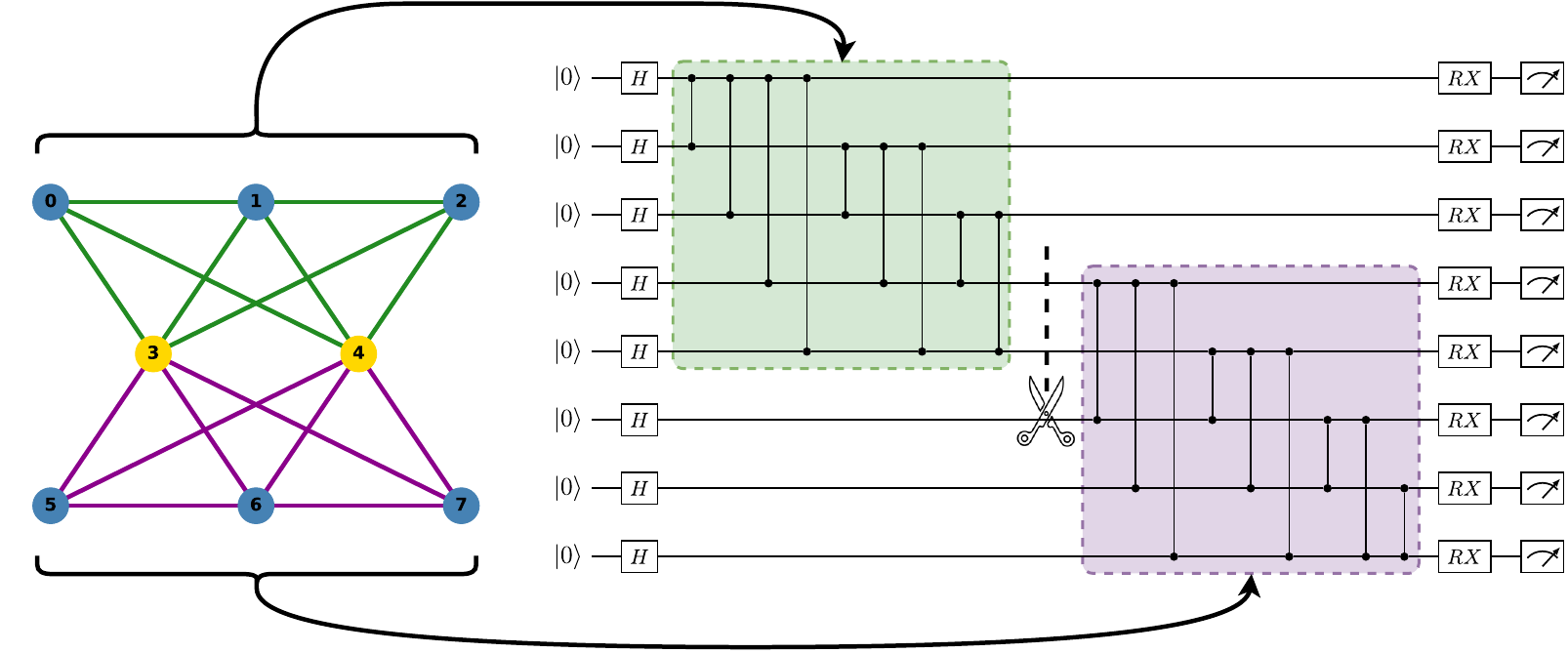}
	\caption{Schematic diagram of the idea behind Theorem~\ref{thm:main_theorem_bipartition}. The graph on the left is the input to the Max-Cut problem, which has a balanced vertex separator given by the vertices in yellow. The two-qubit gates are applied in an order which respects the partition of the edges in the graph. We apply the resolution of the identity for circuit cutting on the two wires indicated by the dashed line.}
	\label{fig:qaoa_bipartition_diagram}
\end{figure}

\section{Circuit cutting complexity for QAOA with clustered graphs}
\label{app:qaoa_complexity}

We consider the execution of a multi-layered circuit resulting from encoding a problem graph of the type introduced in Sec.~\ref{subsec:structure_qaoa_cutting} for Max-Cut using QAOA. There are two key components of the simulation to consider: the number of unique sub-circuits $N$ required for circuit cutting, and the number of qubits in each fragment circuit $m$. For a circuit of $p$ layers and a problem graph of $r$ clusters, $n$ nodes within each cluster, and $k$ vertex separators, we find:
\begin{align}
	\label{eq:n_configs}
	N = N(p, r, k) & = 3^{pk}4^{(p-1)k} +(r-2)12^{(2p-1)k} + 3^{(p-1)k}4^{pk},
\end{align}

\noindent The number of qubits required to simulate any fragment is upper bounded by $m = n + (3p-1)k$ (including additional auxiliary qubits due to mid-circuit measurements). In Sec.~\ref{sec:large_qaoa_simulation}, by treating a GPU as a simulator analog of a quantum hardware device, we aim for one GPU per fragment execution, i.e., $m = n + (3p-1)k \leq 30$.

\section{Analyzing the efficiency of simulated circuit cutting}
\label{app:strong_scaling}

This appendix considers how the simulation run time scales for performing circuit cutting on QAOA circuits, for the problem graphs introduced in Sec.~\ref{subsec:structure_qaoa_cutting}, as the number of classical resources is varied.

Increasing the number of GPUs used for the parallelized execution of the $N$ fragment sub-circuits demonstrates reasonable strong-scaling behaviour~\cite{shoukourian2014predicting}, as it decreases the overall runtime for a QAOA circuit execution. However, given that the number of circuit executions per GPU reduces, we reach approximately 15-times speed-up using 32 GPU-nodes as depicted in Fig.~\ref{fig:time_vs_gpus}, yielding an efficiency of approximately 47\% at our largest scale evaluation. These inefficiencies are to be expected since increasing the number of GPU nodes also increases the time needed for scheduling, transmission, device setup overheads, and serial execution components which all contribute to the overall runtime.

\begin{figure}[t]
	\centering
	\includegraphics[width=0.5\linewidth]{./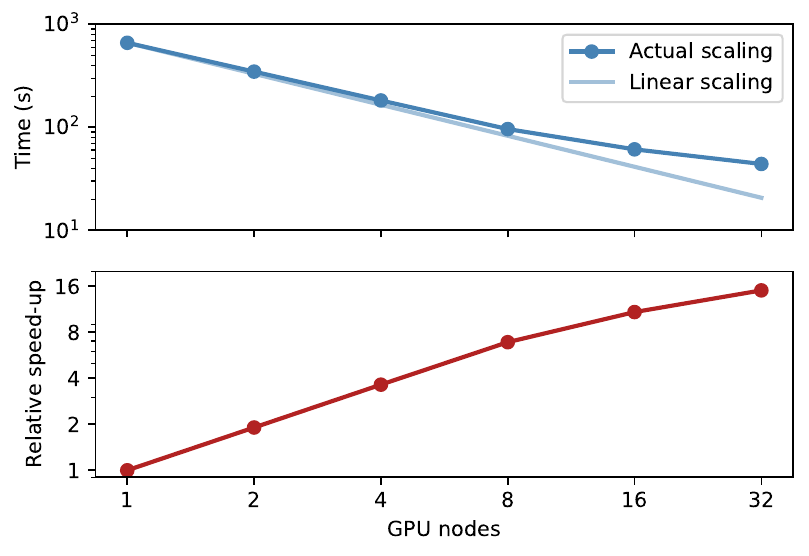}
	\caption{
		Execution time and relative speed-up vs. the number of GPU nodes for an execution of a 79-qubit QAOA circuit. The input problem graph parameters are fixed at $p=1$, $r=3$, $n=25$, and $k=2$, while the number of GPU nodes used in the distributed execution of the fragment circuits is varied. The ideal linear scaling relationship is also provided with reference to the single-node performance, and illustrates the divergence from ideal strong-scaling behaviour at large node numbers.
	}
	\label{fig:time_vs_gpus}
\end{figure}

\end{document}